\documentclass[11pt]{article}

\usepackage[margin=1in]{geometry}
\usepackage{amsmath,amssymb,amsthm,mathtools}
\usepackage{microtype}
\usepackage{graphicx}
\usepackage{amsthm,amsmath,amssymb}
\usepackage{cite}
\usepackage{tikz}
\usepackage{float}
\usepackage{subcaption}
\usepackage{hyperref}
\usepackage{cleveref}
\usepackage{lipsum}
\usetikzlibrary{patterns,intersections}
\usetikzlibrary{patterns.meta}
\usetikzlibrary{arrows.meta,positioning,calc}
\usetikzlibrary{decorations.pathreplacing}
\newtheorem{lemma}{Lemma}
\newtheorem{remark}{Remark}

\newtheorem{definition}{Definition}
\newtheorem{thm}{Theorem}
\crefname{thm}{Theorem}{Theorems}
\crefrangelabelformat{thm}{#3#1#4--#5#2#6}
\newcommand{\diag}{\mathrm{diag}} 
 
\newcommand{\ci}{\mathcal{C}^I}
\newcommand{\cf}{\mathcal{C}^F}

\newcommand{\rf}{r^F}
\newcommand{\ri}{r^I}
\newcommand{\kf}{k^F}
\newcommand{\ki}{k^I}
\newcommand{\nf}{n^F}
\newcommand{\nii}{n^I}

\newcommand{\F}{\mathbb F}
\newcommand{\col}{\operatorname{col}}
\newcommand{\rank}{\operatorname{rank}}
\newcommand{\Boverall}{B}%初始码的生成矩阵
\newcommand{\Coverall}{C}%终止码的生成矩阵
\newcommand{\Aone}{A}
\newcommand{\Mone}{M}
\newcommand{\Atwo}{A}
\newcommand{\Mtwo}{M}
\newcommand{\mthr}{m}
\newcommand{\delthr}{ \delta}
\newcommand{\hthr}{h}
\newcommand{\Tthr}{T}
\newcommand{\betthr}{\beta}
\newcommand{\sigthr}{\sigma}
\newcommand{\uthr}{u}
\newcommand{\wthr}{w}

\newcommand{\nzerthr}{n_0}
\newcommand{\dzerthr}{d_0}

\newcommand{\tauthr}{\tau}
\newcommand{\mathHthr}{\mathcal{H}}
\newcommand{\mathUthr}{\mathcal{U}}
\newcommand{\mathVthr}{\mathcal{V}}
\newcommand{\Wthr}{W}
\newcommand{\Qthr}{Q^F}
\newcommand{\Qthrini}{Q^I}

\newcommand{\Qpolthr}{Q_\mathcal{V}(X)}
\newcommand{\Fthr}{F}
\newcommand{\mathFthr}{\mathcal{F}}

\begin{document}
\title{Optimal Conversion Bandwidth for MDS Convertible Codes in the Split Regime with $\rf<\kf<\ri$}
\author{Lewen Wang \and Sihuang Hu \thanks{
    The authors are with the School of Cyber Science and Technology, Shandong University, Qingdao, Shandong 266237, China. Email: lewenwang3@gmail.com, husihuang@sdu.edu.cn. 
    % The corresponding author is Sihuang Hu.
  }
            }
\date{}
\maketitle
% {\renewcommand{\thefootnote}{}
  		% \footnotetext{
        % The corresponding author is Sihuang Hu.
  		% }
    % }
\begin{abstract}
Erasure codes are widely used in distributed storage systems to provide fault tolerance. 
An $[n,k]$ erasure code encodes $k$ data symbols into $n$ coded symbols and distributes them across $n$ storage nodes.
Once the code parameters are fixed, the achievable fault tolerance is also fixed.
However, the failure rates of storage nodes may vary over time, and dynamically adapting the code parameters to these variations can substantially reduce storage overhead.
Motivated by this observation,  Maturana and Rashmi introduced convertible codes, which allow an $[\nii,\ki]$ initial code with redundancy $\ri=\nii-\ki$ to be transformed into an $[\nf,\kf]$ final code with redundancy $\rf=\nf-\kf$, while preserving the required code properties.
Convertible codes whose initial and final codes are both MDS codes are called MDS convertible codes. 
They are of particular interest because MDS codes provide the maximum erasure tolerance for a given amount of
storage overhead.

Several works have established lower bounds and constructions for the bandwidth cost of MDS convertible codes in the split regime. 
However, the tight bound in the parameter range $\rf<\kf<\ri$  remains unknown.
% In this work, we derive an improved closed-form lower bound consisting of three cases for this remaining range by establishing a family of new rank inequalities.
In this work, we establish a family of new rank inequalities for stable MDS convertible codes with linear conversion
procedures and derive an improved lower bound on the  bandwidth cost consisting of three cases for this remaining range.
We prove that the bound is tight by presenting three explicit constructions, one for each case.
% Since the write bandwidth is fixed for stable conversion, our results also determine the minimum total conversion
% bandwidth.

\end{abstract}
\section{Introduction}

Erasure coding has become a fundamental technique for protecting  data against device failures in distributed storage systems.
In this setting, an $[n,k]$ erasure code over a finite field $\mathbb F_q$ encodes $k$ data symbols into $n$ coded symbols, which are then stored on $n$ distinct storage nodes.
Once the code parameters $n$ and $k$ are fixed, the fault-tolerance capability of the system is also fixed.
In practice, however, the failure rates of storage devices may vary over time.
In particular, the failure rate may follow a bathtub curve over the device lifetime. 
It is relatively high during an early-failure period, becomes lower and relatively stable during the useful-life period, and increases again during the wear-out period.
Prior work by Kadekodi \emph{et al.} has shown that adapting the code parameters to changes in device reliability can substantially reduce storage overhead while maintaining a prescribed reliability level~\cite{kadekodi2019cluster}.
Therefore, it is desirable to dynamically adjust the code parameters as the fault-tolerance requirements of the system change.

A straightforward approach is to decode and re-encode  all stored data, which incurs significant I/O and communication costs.
To reduce these costs,  Maturana and Rashmi introduced the framework of \emph{convertible codes}, which enables efficient conversion  from an initial code to a final code with different parameters while preserving
desired code properties~\cite{maturana2020convertible,maturana2022convertible,Maturana23}.
When additional structural properties are required, convertible codes can be designed to preserve those properties during conversion.
In particular, MDS codes are of special interest due to their widespread use in distributed storage and their optimal erasure tolerance.
A convertible code whose initial and final codes are both MDS codes is called  an \emph{MDS convertible code}.
The framework has subsequently been extended to other classes of codes, including locally repairable codes~\cite{Maturana2023isit,Kong2024,Ge2025LocallyRC} and minimum storage regenerating(MSR) codes~\cite{Yang2026MSR}.

There are two basic regimes in code conversion.
The first is the \emph{merge regime}, where several initial codewords are converted into a single final codeword.
The second is the \emph{split regime}, where one initial codeword is converted into several final codewords.
These two regimes serve as fundamental building blocks for more general code conversions.

The efficiency of code conversion is commonly measured by two  metrics.
The first is the \emph{access cost}, namely, the total amount of store data accessed during conversion.
Maturana and Rashmi established fundamental lower bounds on the access cost of MDS convertible codes and provided optimal constructions for both the merge and split regimes~\cite{maturana2020convertible,maturana2022convertible,Maturana2020isit}.
Subsequent work studied, among other directions, low-field-size constructions~\cite{Chopra2024OnLF}, irregular and generalized conversions~\cite{Ge2024MDSGC,RKSVK25,Wang2026GeneralizedMerge},
and extensions to locally repairable and regenerating codes
~\cite{Maturana2023isit,Kong2024,Ge2025LocallyRC,Shi2025BoundsAO,Yang2026MSR}.

The second metric is the \emph{bandwidth cost}, namely, the amount of data transferred during conversion.
For the merge regime, Maturana and Rashmi established a tight lower bound and constructed optimal MDS convertible codes achieving this bound~\cite{Maturana2023}.
Recent work has extended the study of the merge regime to more general code settings and constructions
~\cite{RKSVK25,Ge2024MDSGC,Ge2026Polynomial,Wang2026GeneralizedMerge,vajhacooperative}.
In contrast, the bandwidth problem in the split regime remains less understood.
Maturana and Rashmi derived an information-flow lower bound and proposed  explicit constructions based on the piggybacking framework~\cite{Maturana2022BandwidthCO}.
Their results characterize the optimal bandwidth when $r^F\ge k^F$ or $r^I\le r^F$.
For the remaining range, their stronger bound relies on an additional conjecture.
More recently, Wang and Hu~\cite{wang2026lower} and Singhvi \emph{et al.}~\cite{singhvi2025tight} improved the bandwidth lower bounds using algebraic and information-theoretic approaches, respectively.
Both works established tight bounds when $r^F\ge k^F$ or $r^I\le k^F$.
However, the tightness of the lower bounds for $r^F<k^F<r^I$ remained open.

In this paper, we resolve the remaining regime $r^F<k^F<r^I$ for stable MDS convertible codes with linear conversion procedures.
We strengthen the lower bound in this regime and provide explicit constructions attaining the resulting bounds.
Our main contributions are summarized as follows.
\begin{itemize}
    \item \textbf{An improved  lower bound for $\rf<\kf<\ri$.}
    Building on coordinate projections of the associated column spaces of the initial and final codes, we derive a family of rank inequalities for the read bandwidth from unchanged and retired nodes.
    These inequalities include the constraints used in ~\cite{wang2026lower} as special cases.
    In particular, when $\rf<\kf$ and $\ri\le\kf$, all additional inequalities are redundant, and the resulting linear program reduces to the two constraints used in~\cite{wang2026lower}.
    Hence, it recovers the previous lower bound in this regime, which is tight as it is matched by the constructions of Maturana and Rashmi~\cite{Maturana2022BandwidthCO}.
    For the remaining range $\rf<\kf<\ri$, the new inequalities impose additional constraints that strengthen the lower bound.
    By solving the resulting linear program, we obtain an improved  lower bound on the read bandwidth for the remaining range $\rf<\kf<\ri$.
    The bound consists of three cases and is formally stated in \cref{thm:improved-lower-bound}.
    \item \textbf{ Explicit constructions for the three cases.}
    We provide three explicit constructions corresponding to the three cases of the lower bound.
    For Cases~I and~II, we select suitable rows of the final parity matrix, referred to as \emph{anchor rows}, and use an invertible transformation to map the corresponding anchor-row matrix to a Cauchy submatrix of the initial parity matrix.
    This simultaneously yields the required column-space relation for conversion and ensures the MDS property.
    For Case~III, we construct a two-layer auxiliary MDS array code: the first layer is based on structured polynomial evaluations, while the second layer is based on a Reed--Solomon code.
    This realizes the required column-space inclusion while preserving the MDS property.
    \item \textbf{Tightness in the stable linear conversion model.}
    The three constructions attain the corresponding lower bounds whenever the required integrality conditions are satisfied.
    These conditions can always be met by choosing the subpacketization $\ell$ appropriately, with $\ell$ polynomial in the initial code length $n^I$.
    Cases~I and~II can be realized over fields of size $q=O\bigl((\ki+\ri)\ell\bigr),$
    while for Case~III it suffices to take $q=O\!\left(((\ki+\ri)\ell)^{L_0}\right)$ for some absolute constant $L_0$.
    Hence, our lower bound is tight throughout the range $\rf<\kf<\ri$ within the stable linear conversion model.
    Since the write bandwidth is fixed for stable conversion, this also determines the minimum total conversion bandwidth in this regime.
\end{itemize}

The remainder of this paper is organized as follows.
Section~\ref{sec:preliminaries} introduces the necessary definitions and notation.
Section~\ref{sec:bounds} derives improved lower bounds on the read bandwidth for the regime $\rf<\kf<\ri$.
Section~\ref{sec:constructions} presents explicit constructions attaining these lower bounds and analyzes their field-size and subpacketization requirements.
Finally, Section~\ref{sec:conclusion} concludes the paper.

\section{Preliminaries}
\label{sec:preliminaries}
In this section, we introduce the basic definitions and notation used throughout the paper.

For a positive integer $a$, let $[a]:=\{0,1,\ldots,a-1\}.$
For a matrix $A\in\F_q^{r\times s}$, we index its rows by $[r]$ and its columns by $[s]$.
For $\mathcal S\subseteq[r]$ and $\mathcal T\subseteq[s]$, let $A[\mathcal S;\mathcal T]$ denote the submatrix formed by the rows indexed by $\mathcal S$ and the columns indexed by $\mathcal T$, both taken in increasing order unless otherwise specified.
In particular, $A[\mathcal S;:]$ and $A[:;\mathcal T]$ denote the submatrices formed by the rows in $\mathcal S$ and the columns in $\mathcal T$, respectively.
The same notation is used for block matrices, in which case $\mathcal S$ and $\mathcal T$ index block rows and block columns.
For a positive integer $s$ and matrices
$A_1,\ldots,A_s$, let $\diag(A_1,\ldots,A_s)$
denote the block diagonal matrix with diagonal blocks
$A_1,\ldots,A_s$.

Let $\F_q$ be a finite field with $q$ elements.
An $[n,k,\ell]$ \emph{array code} $\mathcal C$ over $\F_q$ is a $k\ell$-dimensional subspace of $\F_q^{n\ell}$.
A codeword $\mathbf c=(\mathbf c_0,\ldots,\mathbf c_{n-1})\in \mathcal C$ consists of $n$ symbols $\mathbf c_i =(c_{i,0},\ldots,c_{i,\ell-1})\in\F_q^\ell$ for $i\in[n]$, and each scalar $c_{i,j}$ is referred to as a \emph{subsymbol}.
An $[n,k,\ell]$ array code is MDS if any $k$ out of its $n$ symbols suffice to recover the entire codeword.
A generator matrix $G$ of $\mathcal C$ has size $k\ell\times n\ell$, and its rows form a basis of $\mathcal{C}$.
For a message vector $\mathbf m\in\F_q^{k\ell}$, the encoding function of $\mathcal C$ is $\mathcal C(\mathbf m)=\mathbf mG$.

We next recall the definition of MDS convertible codes in the split regime.
\begin{definition}[MDS Convertible Codes~\cite{maturana2022convertible}]\label{def_convertible_code}
    Let $\lambda\ge2$ be an integer.
    An $[n^{I},k^{I}=\lambda\kf;n^{F},k^{F};\ell]$  MDS convertible code over  $\F_q$ consists of
    \begin{itemize}
        \item A pair of   codes $(\ci,\cf)$ where  $\ci$ is an initial  MDS array code with parameter $[\nii,\ki,\ell]$ and  $\cf$ a final  MDS array code with parameter $[\nf,\kf,\ell]$.
        \item A conversion procedure $\mathcal{T}$ with input $\{\ci(\mathbf{m}):\mathbf{m}=(\mathbf{m}_0,\cdots,\mathbf{m}_{\lambda-1})\}$  and output $\{\cf(\mathbf{m}_i):i\in [\lambda]\}$ for all  $\mathbf{m}_i\in \F_q^{\kf\ell}$.
    \end{itemize}
\end{definition}
Throughout the paper, let $\ri=\nii-\ki$ and $\rf=\nf-\kf$ denote the numbers of initial and final parity symbols, respectively.
Throughout this paper, we consider conversion procedures that read subsymbols directly from the initial nodes. A conversion procedure is called linear if each subsymbol of a new symbol is computed as an $\F_q$-linear combination of the subsymbols read from the initial codeword.
This focus is without loss of generality for practical purposes, as to the best of our knowledge, virtually all existing
constructions of convertible codes employ linear conversion procedures.

Each symbol involved in the conversion belongs to one of three categories:
(1) Unchanged symbols, which appear in both the initial and final codewords;
%For each $i\in[\lambda]$, let  $U_i$ denote the set of indices of initial codeword symbols  that remain unchanged in  the $i$-th finial codeword.  
(2) Retired symbols, which appear only in the initial codeword;
%Let $R$ be the set of indices of retired symbols in the initial codeword.  
(3) New symbols, which appear only in the final codewords.
For $i\in [\lambda]$, let  $N_i$ denote the set of indices of new symbols in the $i$-th final codeword.

For each initial symbol $i\in[\nii]$, let $ D_i\subseteq[\ell]$ be the set of subsymbol coordinates read from the $i$-th initial symbol during conversion, and define $\beta_i=|D_i|.$ The read, write, and total bandwidth costs are, respectively,
\begin{itemize}
    \item The  read bandwidth cost  is  $R=\sum\limits_{i=0}^{\nii-1}\beta_i$. 
    \item The write bandwidth cost is   $W=(\sum\limits_{i=0}^{\lambda-1}|N_i|)\ell$.
    \item The total bandwidth cost is $R+W$.
\end{itemize}

A stable conversion maximizes the number of unchanged symbols and hence minimizes the write bandwidth for the given code parameters.
\begin{definition}[Stable convertible code~\cite{maturana2022convertible}]
\label{def_stable_convertible}
An $[n^I,k^I;\,n^F,k^F;\ell]$ MDS convertible code is called \emph{stable} if it uses the maximum number of unchanged symbols among all MDS convertible codes with the same parameters.
\end{definition}
For an $ [n^I,k^I=\lambda k^F;\,n^F,k^F;\ell]$ MDS convertible code in the split regime, at most $\ki$ initial symbols can remain unchanged, and this bound can be achieved by straightforward re-encoding~\cite{wang2026lower}.
As in  previous work~\cite{Maturana2022BandwidthCO,singhvi2025tight,wang2026lower}, we focus on stable convertible codes.
In this case, the conversion has exactly $\ki$ unchanged symbols, so each final codeword contains $\kf$ unchanged symbols and $\rf$ new symbols. 
Hence, $W=\lambda\rf\ell.$
Therefore, minimizing the total conversion bandwidth is equivalent to minimizing the read bandwidth $R$.

The following lemma establishes a systematic representation for generator matrices of the initial and final codes.
\begin{lemma}[Systematic representation]
\label{lem:systematic-representation}
Consider a stable $[n^I,\ki=\lambda\kf;\,n^F,\kf;\ell]$ linear MDS convertible code in the split regime.
After suitable node permutations and invertible changes of message coordinates, we may take the first $\ki$ symbols of the initial code to be the unchanged systematic symbols, partitioned into $\lambda$ groups of $\kf$ symbols corresponding to the $\lambda$ final codewords.
The generator matrices of the initial code and the final code for each group then have the systematic forms
\[
    G^I=[I_{\ki\ell}\mid B],
    \qquad
    G_t^F=[I_{\kf\ell}\mid C_t],
    \quad t\in[\lambda].
\]
Write $B=(B_{i,j})_{i\in[\ki],\,j\in[\ri]},$ and $C_t=(C_{t;(i,j)})_{i\in[\kf],\,j\in[\rf]},$ where all blocks have size $\ell\times\ell$.
Moreover, matrices $B$ and $C_t$ are block superregular:
every square block submatrix is nonsingular.
\end{lemma}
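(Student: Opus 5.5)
We first use stability to fix the shape of the conversion, then choose bases so that both generator matrices become systematic, and finally read off superregularity from the MDS property. Since the code is stable, exactly $\ki$ initial symbols remain unchanged. Each final codeword has length $\nf=\kf+\rf$. We first argue that every final codeword contains exactly $\kf$ unchanged symbols. If final codeword $t$ contained more than $\kf$ unchanged symbols, then some $\kf+1$ initial symbols would all lie in $\cf(\mathbf m_t)$. By the MDS property of $\cf$, any $\kf$ of them determine $\mathbf m_t$ and hence the remaining one. These $\kf+1$ symbols, being at most $\ki$ initial symbols, would then be linearly dependent as functions of $\mathbf m$. That contradicts the MDS property of $\ci$, because any $\ki$ initial symbols are independent. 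Thus each group has at most $\kf$ unchanged symbols, and the total $\ki=\lambda\kf$ forces equality. Permute the initial nodes so that group $t$ occupies positions $t\kf,\dots,(t+1)\kf-1$. Within each final codeword, permute the nodes so that its unchanged symbols come first, followed by the $\rf$ new symbols.

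Next we make the codes systematic. The first $\ki$ initial symbols form an information set of the MDS code $\ci$, so the $\ki\ell\times\ki\ell$ submatrix $G^I[:;[\ki]]$ is invertible. Let $P$ denote this submatrix. Replacing $\mathbf m$ by $\mathbf m P$ (an invertible change of message coordinates) gives $G^I=[I_{\ki\ell}\mid B]$. With this choice the message $\mathbf m=(\mathbf m_0,\dots,\mathbf m_{\lambda-1})$ equals the contents of the unchanged nodes. For the final codes, the $\kf$ unchanged symbols in codeword $t$ again form an information set of $\cf$. So the message of the $t$-th final code can be reparametrized so that these symbols equal $\mathbf m_t$, which gives $G^F_t=[I_{\kf\ell}\mid C_t]$.

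The one delicate point is consistency: the final message for group $t$ must be exactly the block $\mathbf m_t$ of the new initial message, rather than an arbitrary function of $\mathbf m$. This holds because the $t$-th final codeword is a function of $\mathbf m_t$ alone, by the definition of $\mathcal T$ in Definition~\ref{def_convertible_code}. Its systematic part is exactly the unchanged nodes of group $t$, which equal the corresponding coordinates of the initial message after our change. Strictly, the original parametrization $\mathbf m_t$ of the final code may differ from the block of $\mathbf mP$ by an invertible map. We therefore redefine both parametrizations to be the unchanged-node contents, which is a legitimate invertible change of coordinates on each side.

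Finally, we prove block superregularity, which is the standard correspondence between MDS systematic array codes and superregular parity matrices. Take $S\subseteq[\ki]$ and $T\subseteq[\ri]$ with $|S|=|T|$. Consider the $\ki$ nodes formed by $([\ki]\setminus S)\cup T$, with $T$ taken among the parity nodes. Their generator submatrix is invertible by the MDS property. After a column permutation it has block form $\begin{pmatrix} I & B[[\ki]\setminus S;T]\\ 0 & B[S;T]\end{pmatrix}$, up to reordering of block rows, so its determinant equals $\pm\det B[S;T]$. Hence $B[S;T]$ is nonsingular. The same argument applied to each $\cf$ shows that every $C_t$ is block superregular.
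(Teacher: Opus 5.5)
Your proposal is correct and follows the same route as the paper. Normalizing by the generator submatrix of the unchanged nodes is exactly the paper's change of coordinates $\widehat{\mathbf m}=\mathbf mA$. Your consistency argument (the $t$-th final codeword depends only on $\mathbf m_t$, so both parametrizations can be taken to be the unchanged-node contents) is the paper's observation that $G_1[:;U^I]=\diag(A_0,\ldots,A_{\lambda-1})$ is block diagonal.

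You also justify two steps the paper only asserts: that each final codeword contains exactly $\kf$ unchanged symbols, and the block-triangular determinant argument that derives block superregularity from the MDS property. One detail to state explicitly: order the unchanged positions in each final codeword consistently with the corresponding initial positions.
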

\begin{proof}
Let $G_1$ and $G_2$ be the original initial and final generator matrices, respectively.
For each $t\in[\lambda]$, let $U_t^F$ be the set of the $\kf$ unchanged node positions in the $t$-th final codeword, and let $\overline U_t^F=[\nf]\setminus U_t^F$.
Let $P_t$ be the block permutation matrix that places the positions in $U_t^F$ first, followed by those in $\overline U_t^F$, each in increasing order.
Write
\[
    G_2P_t=[A_t\mid H_t],
    \qquad
    \text{where} \ A_t=G_2[:;U_t^F],\ \text{and}\    H_t=G_2[:;\overline U_t^F].
\]
Since the final code is MDS, each $A_t$ is nonsingular.

Let $U^I$ be the set of the $\ki$ unchanged initial node positions.
Order these positions by their corresponding final codewords, consistently with the chosen orderings of $U_0^F,\ldots,U_{\lambda-1}^F$.
Let $P$ be the block permutation matrix placing $U^I$ first in this order, followed by its complement $\overline U^I=[\nii]\setminus U^I$ in increasing order.
Set $A=\operatorname{diag}(A_0,\ldots,A_{\lambda-1}).$
By the unchanged-node correspondence,
\[
    \mathbf mG_1[:;U^I]=\mathbf mA
    \qquad
    \text{for every }
    \mathbf m=(\mathbf m_0,\ldots,\mathbf m_{\lambda-1})
    \in\F_q^{\ki\ell}.
\]
Consequently, $G_1[:;U^I]=A.$
Applying the corresponding invertible row transformations
gives
\[
\begin{aligned}
    G^I
    &=A^{-1}G_1P
      =[I_{\ki\ell}\mid B],
    & B&=A^{-1}G_1[:;\overline U^I],\\
    G_t^F
    &=A_t^{-1}G_2P_t
      =[I_{\kf\ell}\mid C_t],
    & C_t&=A_t^{-1}H_t,
    \quad t\in[\lambda].
\end{aligned}
\]
The row transformations correspond to the changes of message coordinates $\widehat{\mathbf m}_t=\mathbf m_tA_t$ and $\widehat{\mathbf m}=\mathbf mA$.
Together with the consistent node permutations, they preserve the unchanged-node correspondence and both bandwidth costs.
Moreover, invertible row transformations and node permutations preserve the MDS property, so $B$ and each $C_t$ are block superregular.
This can be viewed as a block-matrix extension of the superregular matrices defined in Section II.B of~\cite{maturana2022convertible}.
\end{proof}
\begin{remark}
The matrices $C_t$ need not coincide, since different final
codewords may have different unchanged node positions.
However, they are not chosen independently: each is obtained
from the same final generator matrix $G_2$ through
$C_t=A_t^{-1}H_t$.
In~\cite{wang2026lower}, although the matrices $C_t$ are
assumed to coincide, the lower-bound arguments remain valid
when they differ, since these arguments use only the block-superregular property of each $C_t$.
\end{remark}

Define $x=\sum_{i=0}^{\ki-1}\beta_i,$ and  $y=\sum_{i=\ki}^{n^I-1}\beta_i.$
Thus, $x$ and $y$ are the total amounts of data read from the unchanged and retired symbols, respectively, and we have the read bandwidth cost $ R=x+y.$

We next recall a characterization of linear conversion established in ~\cite{wang2026lower}.
For $E\subseteq[\ell]$, let $\overline E=[\ell]\setminus E$ be its complementary set.
For each $t\in[\lambda]$, define
$    C^{(t)}
    =\left(
        C_{t;(i,j)}[\overline D_{t\kf+i};:]
      \right)_{i\in[\kf],\,j\in[\rf]}.$  
and $\widetilde C=\operatorname{diag}\bigl(C^{(0)},C^{(1)},\ldots,C^{(\lambda-1)}\bigr).$ 
Let $\widetilde B=\left(B_{i,j}[\overline D_i;D_{\ki+j}]\right)_{i\in[\ki],j\in[\ri]}$.

\begin{lemma}[Linear-conversion characterization]
\label{lem:conversion-characterization}
Let $(\mathcal C^I,\mathcal C^F)$ be a stable $[n^I,\ki=\lambda\kf;\,n^F,\kf;\ell]$ MDS convertible code
with systematic representations $G^I$ and $G_t^F$, $t\in[\lambda]$, as described in Lemma~\ref{lem:systematic-representation}.
For any $D_i\subseteq[\ell]$, $i\in[n^I]$, there exists a linear stable conversion from $C^I$ to $C^F$ using only the subsymbols specified by the sets $D_i$ if and only if
\begin{equation}
    \col(\widetilde C)
    \subseteq
    \col(\widetilde B).
    \label{eq:conversion-characterization}
\end{equation}
Moreover, such a conversion can be performed with read bandwidth $x+\rank(\widetilde B).$
In particular, if the linear conversion procedure minimizes the read bandwidth, then $\widetilde B$ may be assumed to have full column rank.
\end{lemma}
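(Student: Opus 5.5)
We write the data in systematic form, $\mathbf m=(\mathbf m_0,\ldots,\mathbf m_{\lambda-1})$, and derive both directions of Lemma~\ref{lem:conversion-characterization} from one linear-algebra identity. Fix $t\in[\lambda]$ and $j\in[\rf]$. The $j$-th new symbol of the $t$-th final codeword is $\sum_{i\in[\kf]}\mathbf m_{t,i}C_{t;(i,j)}$, where $\mathbf m_{t,i}=\mathbf m_{t\kf+i}\in\F_q^\ell$. Since node $t\kf+i$ is systematic, reading $D_{t\kf+i}$ reveals exactly the coordinates $\mathbf m_{t\kf+i}[D_{t\kf+i}]$. Subtracting their contribution, the remaining quantity $\sum_i \mathbf m_{t,i}[\overline D_{t\kf+i}]\,C_{t;(i,j)}[\overline D_{t\kf+i};:]$ must be computed from the remaining reads. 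These are the reads from the retired (parity) nodes, $\mathbf c_{\ki+j'}[D_{\ki+j'}]=\sum_{i\in[\ki]}\mathbf m_i B_{i,j'}[:;D_{\ki+j'}]$. Again subtracting the known part indexed by $D_i$, we obtain the vector $\mathbf u\widetilde B$, where $\mathbf u=(\mathbf m_i[\overline D_i])_{i\in[\ki]}$ collects all unread systematic coordinates. Stacking over all $t$ and $j$, the quantity that must be produced is exactly $\mathbf u\widetilde C$, and $\mathbf u$ ranges over the full space $\F_q^{\sum_i|\overline D_i|}$ as $\mathbf m$ varies.

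\emph{Sufficiency.} If $\col(\widetilde C)\subseteq\col(\widetilde B)$, then $\widetilde C=\widetilde B X$ for some matrix $X$. The conversion procedure is:
\begin{enumerate}
\item read the sets $D_i$;
\item form $\mathbf u\widetilde B$ from the parity reads by subtracting the known linear terms;
\item compute $\mathbf u\widetilde C=(\mathbf u\widetilde B)X$;
\item add back the known contributions to obtain every new symbol.
\end{enumerate}
This procedure is linear and stable.

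\emph{Necessity.} Suppose a linear conversion exists. Every new subsymbol is then a linear function $L$ of the read data, and the read data is a linear function of $(\mathbf u\widetilde B,\ \text{known coordinates})$. Hence $\mathbf u\widetilde C=(\mathbf u\widetilde B)X+\mathbf w Y$, where $\mathbf w$ is the vector of read systematic coordinates, which is independent of $\mathbf u$. Setting $\mathbf w=0$ and letting $\mathbf u$ vary freely gives $\widetilde C=\widetilde B X$, which is the inclusion~\eqref{eq:conversion-characterization}.

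The main obstacle is this decoupling. It requires that the read systematic coordinates and the unread coordinates $\mathbf u$ are independent free variables. This holds because the systematic form makes $\mathbf m$ an arbitrary vector, and its coordinates split disjointly into the read part $\mathbf w$ and the unread part $\mathbf u$.

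\emph{Bandwidth.} Choose a set of columns of $\widetilde B$ forming a basis of $\col(\widetilde B)$; these columns correspond to certain subsymbols in the retired-node reads. We shrink the sets $D_{\ki+j}$ to these columns. The column space of $\widetilde B$ is unchanged, so the inclusion still holds. The rows of $\widetilde B$ are indexed by the sets $\overline D_i$, $i\in[\ki]$, which are unaffected because only parity reads change. The reads from retired nodes now number exactly $\rank(\widetilde B)$, giving total read bandwidth $x+\rank(\widetilde B)$. In particular, a read-bandwidth-minimizing procedure cannot have redundant columns, since removing one would lower $y$ while keeping the inclusion valid. Therefore $\widetilde B$ may be assumed to have full column rank.
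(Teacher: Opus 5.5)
Your proof is correct, and it is essentially the standard argument that the paper does not reproduce but imports from \cite[Lemma~2 and Remark~1]{wang2026lower}. You split $\mathbf m$ into read coordinates $\mathbf w$ and unread coordinates $\mathbf u$, reduce the conversion requirement to $\mathbf u\widetilde C=(\mathbf u\widetilde B)X$ for all $\mathbf u$, and then prune the parity reads to a column basis of $\widetilde B$. Because you keep the index $t$ in $C_{t;(i,j)}$ throughout, your write-up also makes explicit the paper's remark that the argument does not need the matrices $C_t$ to coincide.
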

The necessity of \eqref{eq:conversion-characterization} and
the full-column-rank property follow by the same arguments
as in \cite[Lemma~2]{wang2026lower}, while the converse
follows by the argument in \cite[Remark~1]{wang2026lower}.
These arguments do not require the matrices $C_t$ to coincide.

We will use the following form of Linnik's theorem in
Section~\ref{caseIII_field}.

\begin{lemma}[Linnik's theorem~\cite{linnik1944least}]
\label{lem:linnik}
There exist absolute constants $c,L>0$ such that, for every
integer $d\ge 2$ and every integer $a$ satisfying
$1\le a<d$ and $\gcd(a,d)=1$, the least prime $p$ with
$p\equiv a\pmod d$ satisfies
\[
    p\le c d^L.
\]
\end{lemma}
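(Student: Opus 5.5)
This statement is the classical theorem of Linnik, which the paper imports from~\cite{linnik1944least} rather than proving. I would not reprove it here; within the paper I would cite it, noting that $L=5$ is admissible by later work. For completeness, the plan I would follow is the standard analytic route, organized around an explicit formula for primes in progressions. Write $\psi(x;d,a)=\sum_{n\le x,\,n\equiv a\ (d)}\Lambda(n)$. By orthogonality of Dirichlet characters,
\[
\psi(x;d,a)=\frac{1}{\varphi(d)}\sum_{\chi \bmod d}\overline{\chi}(a)\,\psi(x,\chi).
\]
The goal is to show $\psi(x;d,a)\gg x/\varphi(d)$ for $x=d^{L}$ with $L$ a large absolute constant. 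Since prime powers contribute negligibly, this gives a prime $p\le cd^L$ with $p\equiv a\pmod d$.

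\emph{Step 1: reduce to a sum over zeros.} I would use a smoothed explicit formula, with a smooth weight to avoid truncation errors:
\[
\sum_{n}\Lambda(n)\chi(n)w(n/x)\approx \delta_{\chi}\,x\,\widehat w(1)-\sum_{\rho}x^{\rho}\,\widehat w(\rho),
\]
where $\delta_\chi=1$ for the principal character and $0$ otherwise, and $\rho$ runs over the nontrivial zeros of $L(s,\chi)$. Summing over $\chi$, the main term is $x/\varphi(d)$. It then suffices to show
\[
\sum_{\chi}\sum_{\rho} x^{\operatorname{Re}\rho-1}\,|\widehat w(\rho)|\le 1-\eta
\]
for some fixed $\eta>0$.

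\emph{Step 2: three analytic inputs, all with uniformity in $d$.}
\begin{enumerate}
\item[(i)] A zero-free region $\sigma>1-c_1/\log(d(|t|+2))$ for $\prod_\chi L(s,\chi)$, with at most one exceptional real zero $\beta_1$.
\item[(ii)] A log-free zero-density estimate
\[
N(\sigma,T,d):=\#\{\rho:\ \operatorname{Re}\rho\ge\sigma,\ |\operatorname{Im}\rho|\le T\}\ll (dT)^{c_2(1-\sigma)},
\]
counted over all characters modulo $d$.
\item[(iii)] The Deuring--Heilbronn repulsion: if $\beta_1=1-\lambda_1/\log d$ exists with $\lambda_1$ small, then all other zeros satisfy $1-\operatorname{Re}\rho\gg \log(1/\lambda_1)/\log d$.
\end{enumerate}
Given these, I would split the zeros dyadically by $\lambda=(1-\operatorname{Re}\rho)\log d$. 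Input (ii) bounds the contribution of each dyadic range by roughly $e^{(c_2-L)\lambda}$. Choosing $L>c_2$ makes the total small, apart from the zeros with $\lambda$ bounded, and input (i) makes these few and controlled. In the exceptional case, the term $x^{\beta_1-1}$ nearly cancels the main term. The surviving main term is then $\asymp x(1-\beta_1)\log x/\varphi(d)$, and input (iii) pushes all other zeros far enough left that this still dominates.

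\emph{Main obstacle.} The main difficulty is input (ii), the log-free density estimate. Its constant $c_2$ must not depend on $d$, since any $\log d$ loss destroys the polynomial bound. I would prove it by the Tur\'an power-sum method, or by a large-sieve argument with mollified $L$-functions, detecting each zero via a zero-detecting polynomial. Inputs (i) and (iii) follow from the standard $3+4\cos\theta+\cos2\theta$ positivity argument and its Deuring--Heilbronn refinement. Since all of this is independent of the coding-theoretic content, in the paper I would simply invoke the published result.
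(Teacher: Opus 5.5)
Your proposal matches the paper, which also gives no proof of this classical result and simply cites Linnik (it only uses the case $a=1$, with $d=N_0$, in the Case~III field-size argument). If you ever flesh out the sketch, note that in the exceptional-zero case the argument only yields $\psi(x;d,a)\gg x(1-\beta_1)\log x/\varphi(d)$ rather than $\gg x/\varphi(d)$, and for this to dominate the prime-power contribution (of order $\sqrt{x}$ up to logarithms) you need a fourth, effective input not on your list, such as $1-\beta_1\gg d^{-1/2}(\log d)^{-2}$.
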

% We will use the following Linnik's theorem in Section~\ref{caseIII_field}
% \begin{lemma}[Linnik's ntheorem~\cite{linnik1944least}]\label{Linnik's_theorem}
% There exist positive $c$ and $L$ such that, if we denote $p(a,d)$ the least prime in the arithmetic progression $a+nd$, where $n$ runs through the positive integers and a and $d$ are any given positive coprime integers with $1 ≤ a ≤ d − 1$, then
% $p(a,d)<cd^L.$
% \end{lemma}

% For a matrix $A$, let
% \[
%     \col(A)
% \]
% denote its column space.
% Throughout the remainder of the paper, we restrict attention to stable
% MDS convertible codes with linear conversion procedures in the split
% regime and assume
% \[
%     \rf<\kf<\ri,
%     \qquad
%     \ki=\lambda\kf.
% \]
% For stable conversion, exactly $\ki$ symbols are unchanged.
% Without loss of generality, we take these unchanged symbols to be the
% $\ki$ systematic symbols of the initial code, partitioned into
% $\lambda$ groups of $\kf$ symbols corresponding to the $\lambda$ final
% codewords.
% Consequently, the write bandwidth is fixed as
% \[
%     W=\lambda\rf\ell,
% \]
% and minimizing the total conversion bandwidth is equivalent to
% minimizing the read bandwidth $R$.
\section{A Family of Rank Inequalities and Improved Lower Bounds}
\label{sec:bounds}

In this section, we derive a family of rank inequalities for stable MDS convertible codes with linear conversion procedures in the split regime, and use them to obtain improved lower bounds on the read bandwidth.
% The first two members of this family recover the constraints used in
% our previous work~\cite{wang2026lower}, while the remaining inequalities
% provide additional constraints on the read bandwidth.
% We then show that these additional constraints lead to improved lower
% bounds precisely in the remaining regime
% \[
%     \rf<\kf<\ri.
% \]
For $\rf\ge\kf$, the minimum read bandwidth is already known to be
$\ki\ell$, achieved by full re-encoding ~\cite{Maturana2022BandwidthCO,wang2026lower,singhvi2025tight}.
Therefore, throughout this section, we focus on the nontrivial case $\rf<\kf.$

\subsection{A Family of Rank Inequalities}
For each $t\in[\lambda]$, define $ x_t=\sum_{j=t\kf}^{(t+1)\kf-1}\beta_j.$
Thus, $x_t$ is the total amount of data read from the $\kf$ unchanged symbols associated with the $t$-th final codeword, and $ x=\sum_{t\in[\lambda]}x_t.$

Consider a linear conversion procedure that minimizes the read bandwidth.
By Lemma~\ref{lem:conversion-characterization}, $\col(\widetilde C)\subseteq\col(\widetilde B),$ and $\widetilde B$ has full column rank.
Since $\widetilde B$ has $y=\sum_{j=\ki}^{n^I-1}\beta_j$ columns, it follows that $\rank(\widetilde B)=y.$

We first derive a lower bound on the rank of each diagonal block $C^{(t)}$ of $\widetilde C$.
For any $U\subseteq[\kf]$ with $|U|=\rf$, the $\rf\times\rf$ block submatrix $C_t[U;:]$ is nonsingular since $C_t$ is block superregular.
Deleting the rows indexed by $D_{t\kf+i}$ within each block row $i\in U$ decreases the rank by at most
$\sum_{i\in U}\beta_{t\kf+i}$.
The resulting matrix is a submatrix of $C^{(t)}$.
Therefore,

\[
    \rank(C^{(t)})
    \ge
    \rf\ell-\sum_{i\in U}\beta_{t\kf+i}.
\]
Averaging over all $\rf$-subsets $U\subseteq[\kf]$ gives $\rank(C^{(t)})\ge\rf\ell-\frac{\rf}{\kf}x_t.$
Since $\widetilde C=\operatorname{diag}\bigl(C^{(0)},\ldots,C^{(\lambda-1)}\bigr),$
summing  over $t\in[\lambda]$ yields
\begin{equation}\label{eq:rank-Ctilde}
    \rank(\widetilde C)\ge\lambda\rf\ell-\frac{\rf}{\kf}x.
\end{equation}

We next derive additional rank constraints by applying coordinate projections to the column spaces of the initial and final codes.
\begin{lemma}
\label{thm:rank-family}
For  every $s\in[\lambda]$, define $q_s:=\min\{s\kf,\ri\}.$
Then the read amounts $x$ and $y$ satisfy
\begin{equation}
    \frac{q_s+(\lambda-s)\rf}{\ki}x+\left(1-\frac{q_s}{\ri}\right)y\ge(\lambda-s)\rf\ell.
    \label{eq:rank-family}
\end{equation}
\end{lemma}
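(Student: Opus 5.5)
We argue by a coordinate projection that removes $s$ of the $\lambda$ final codewords. We then compare the rank that must survive in $\widetilde C$ with the rank that $\widetilde B$ still has on the remaining coordinates, and finally average over all choices.

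\textbf{Case $s=0$.} Here $q_0=0$ and \eqref{eq:rank-family} reads $\frac{\rf}{\kf}x+y\ge\lambda\rf\ell$. This follows at once from $\col(\widetilde C)\subseteq\col(\widetilde B)$ together with $\rank(\widetilde B)=y$, since then $\rank(\widetilde C)\le y$, and the lower bound \eqref{eq:rank-Ctilde}.

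\textbf{Case $s\ge 1$, the projection.} Fix a set $S\subseteq[\lambda]$ with $|S|=s$. Call the rows of $\widetilde B$ (equivalently of $\widetilde C$) lying in the block rows $i=t\kf+i'$ with $t\in S$ the \emph{$S$-rows}. Let $x_S=\sum_{t\in S}x_t$ and $x_{\overline S}=x-x_S$.
\begin{itemize}
\item Since $\widetilde C$ is block diagonal, every column of $\widetilde C$ belonging to a block $C^{(t)}$ with $t\notin S$ vanishes on the $S$-rows.
\item Hence the span of these columns lies in $W_S:=\col(\widetilde B)\cap\{v: v \text{ is zero on the } S\text{-rows}\}$.
\item Because $\widetilde B$ has full column rank $y$, we have $\dim W_S=y-\rank(\widetilde B_S)$, where $\widetilde B_S$ is the submatrix of $\widetilde B$ formed by the $S$-rows.
\item The same averaging argument that gave \eqref{eq:rank-Ctilde} shows that these columns of $\widetilde C$ have rank at least $(\lambda-s)\rf\ell-\frac{\rf}{\kf}x_{\overline S}$.
\end{itemize}
Combining these gives
\[
(\lambda-s)\rf\ell-\tfrac{\rf}{\kf}x_{\overline S}\le y-\rank(\widetilde B_S).
\]

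\textbf{Key step: a lower bound on $\rank(\widetilde B_S)$.} This is where block superregularity of $B$ enters, and it is the step I expect to need the most care. The $S$-rows come from $s\kf$ block rows of $B$, and there are $\ri$ block columns. Choose $U$ among these $s\kf$ block rows and $V\subseteq[\ri]$ with $|U|=|V|=q_s=\min\{s\kf,\ri\}$. Then $B[U;V]$ is a nonsingular $q_s\ell\times q_s\ell$ matrix.
\begin{itemize}
\item Keeping, in each block column $j\in V$, only the columns in $D_{\ki+j}$ leaves $\sum_{j\in V}\beta_{\ki+j}$ linearly independent columns.
\item Deleting, in each block row $i\in U$, the rows in $D_i$ lowers the rank by at most $\sum_{i\in U}\beta_i$.
\item The resulting matrix is a submatrix of $\widetilde B_S$, so $\rank(\widetilde B_S)\ge\sum_{j\in V}\beta_{\ki+j}-\sum_{i\in U}\beta_i$.
\end{itemize}
Averaging over all $q_s$-subsets $V$ of $[\ri]$ and all $q_s$-subsets $U$ of the $s\kf$ block rows gives
\[
\rank(\widetilde B_S)\ge\frac{q_s}{\ri}y-\frac{q_s}{s\kf}x_S.
\]

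\textbf{Averaging over $S$.} Substituting this bound yields
\[
\frac{q_s}{s\kf}x_S+\frac{\rf}{\kf}x_{\overline S}+\Bigl(1-\frac{q_s}{\ri}\Bigr)y\ge(\lambda-s)\rf\ell .
\]
Now average over all $s$-subsets $S\subseteq[\lambda]$. Under this average, $x_S$ becomes $\frac{s}{\lambda}x$ and $x_{\overline S}$ becomes $\frac{\lambda-s}{\lambda}x$. The coefficient of $x$ is therefore
\[
\frac{q_s}{\lambda\kf}+\frac{(\lambda-s)\rf}{\lambda\kf}=\frac{q_s+(\lambda-s)\rf}{\ki},
\]
which is exactly \eqref{eq:rank-family}.

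The points to check are that the column-selection and row-deletion bound is valid inside the restricted matrix $\widetilde B_S$, and that the complementary blocks of $\widetilde C$ genuinely vanish on the $S$-rows. Both follow directly from the definitions of $\widetilde B$ and $\widetilde C$.
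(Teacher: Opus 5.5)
Your proof is correct and follows the paper's argument: you compare $\col(\widetilde C)\cap\ker\pi_S$ with $\col(\widetilde B)\cap\ker\pi_S$, bound $\rank(\widetilde B_S)\ge\frac{q_s}{\ri}y-\frac{q_s}{s\kf}x_S$ via block superregularity of $B$, and average over $S$. The only difference is the order of steps: you apply the per-block bound on $\rank(C^{(t)})$ for $t\notin S$ before averaging, whereas the paper first averages $\rank(\widetilde B_S)-\rank(\widetilde C_S)\le y-\rank(\widetilde C)$ and then invokes \eqref{eq:rank-Ctilde}.
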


\begin{proof}
We first consider $1\le s\le\lambda-1$.
Fix a subset $S\subseteq[\lambda]$ with $|S|=s$, let $\mathcal I_S=\bigcup_{t\in S}\{t\kf,t\kf+1,\ldots,(t+1)\kf-1\},$ and $ x_S:=\sum_{t\in S}x_t.$
Let $\widetilde B_S$ and $\widetilde C_S$  be obtained from $\widetilde B$ and $\widetilde C$, respectively, by retaining the block rows indexed by $\mathcal I_S$.
Since $\widetilde C$ is block diagonal,
\begin{equation}
    \rank(\widetilde C_S)
    =
    \sum_{t\in S}\rank(C^{(t)}).
    \label{eq:rank-CS}
\end{equation}
Let $\pi_S$ denote the coordinate projection that retains the block rows indexed by $\mathcal I_S$, then we have 
$\col(\widetilde B_S)=\pi_S\bigl(\col(\widetilde B)\bigr)$ and  $\col(\widetilde C_S)=\pi_S\bigl(\col(\widetilde C)\bigr).$ 
Since $\col(\widetilde C)\subseteq\col(\widetilde B),$ we have $\left(\col(\widetilde C)\cap\ker\pi_S\right)\subseteq\left(\col(\widetilde B)\cap\ker\pi_S\right).$
By rank-nullity,
\begin{equation}
\begin{aligned}
    \rank(\widetilde B_S)-\rank(\widetilde C_S)=&\rank(\widetilde B)-\rank(\widetilde C) \\
    &-\Bigl(\dim\left(\col(\widetilde B)\cap\ker\pi_S\right)-\dim\left(\col(\widetilde C)\cap\ker\pi_S\right)\Bigr) \\
    \le&
    \rank(\widetilde B)-\rank(\widetilde C)\\
    =&
    y-\rank(\widetilde C).
\end{aligned}
\label{eq:projected-rank}
\end{equation}
We now lower bound $\rank(\widetilde B_S)$.
Let $ H\subseteq\mathcal I_S$, $J\subseteq[\ri]$, and $|H|=|J|=q_s.$
Since $B$ is block superregular, the $q_s\times q_s$ block submatrix $B[H;J]$ is nonsingular.
Restricting each block column indexed by $j\in J$ to the downloaded coordinates $D_{\ki+j}$ preserves the linear independence of the selected columns.
Deleting the rows indexed by $D_i$, $i\in H$, can decrease the rank by at most $\sum_{i\in H}\beta_i$.
Therefore,
\begin{equation}
    \rank(\widetilde B_S)
    \ge
    \sum_{j\in J}\beta_{\ki+j}
    -
    \sum_{i\in H}\beta_i.
    \label{eq:rank-BS-HJ}
\end{equation}
Summing \eqref{eq:rank-BS-HJ} over all $q_s$-subsets
$J\subseteq[\ri]$ and all $q_s$-subsets
$H\subseteq\mathcal I_S$ gives
$$\binom{s\kf}{q_s}\binom{\ri}{q_s}\rank(\widetilde B_S)\ge\binom{s\kf}{q_s}\binom{\ri-1}{q_s-1}\sum_{j\in[\ri]}\beta_{\ki+j}-\binom{s\kf-1}{q_s-1}\binom{\ri}{q_s}x_S.$$
By averaging all $q_s$-subsets $J\subseteq[\ri]$ and all $q_s$-subsets, we have
\begin{equation}
    \rank(\widetilde B_S)
    \ge
    \frac{q_s}{\ri}y
    -
    \frac{q_s}{s\kf}x_S.
    \label{eq:rank-BS}
\end{equation}

Averaging \eqref{eq:projected-rank}  and  \eqref{eq:rank-BS} over all $S\subseteq[\lambda]$ with $|S|=s$, we have 
\begin{align}
    \frac{1}{\binom{\lambda}{s}}
    \sum_{\substack{S\subseteq[\lambda]\\|S|=s}}
    \rank(\widetilde B_S)-\frac{s}{\lambda}\rank(\widetilde C)\le y-\rank(\widetilde C) \label{eq:avg-5}\\
    \frac{1}{\binom{\lambda}{s}}
    \sum_{\substack{S\subseteq[\lambda]\\|S|=s}}
    \rank(\widetilde B_S)
    \ge
    \frac{q_s}{\ri}y-\frac{q_s}{\ki}x.
    \label{eq:avg-BS}
\end{align}
Combining \eqref{eq:avg-5} and \eqref{eq:avg-BS}, we obtain
\begin{equation}
    \frac{q_s}{\ki}x
    +
    \left(1-\frac{q_s}{\ri}\right)y
    \ge
    \frac{\lambda-s}{\lambda}\rank(\widetilde C).
    \label{eq:rank-family-intermediate}
\end{equation}
Using (\ref{eq:rank-Ctilde}) and $\ki=\lambda\kf$ gives
\[
    \frac{q_s+(\lambda-s)\rf}{\ki}x
    +
    \left(1-\frac{q_s}{\ri}\right)y
    \ge
    (\lambda-s)\rf\ell,
\]
which proves \eqref{eq:rank-family} for
$1\le s\le\lambda-1$.

For $s=0$, we have $q_0=0$.
Since $\rank(\widetilde C)\le\rank(\widetilde B)=y,$ together with (\ref{eq:rank-Ctilde}), we obtain $\frac{\rf}{\kf}x+y\ge\lambda\rf\ell,$ which is equivalent to \eqref{eq:rank-family} for $s=0$.
\end{proof}
\begin{remark}
The proof of Lemma~\ref{thm:rank-family} uses only the
block-superregular property of each $C_t$ and does not
require the final codes to coincide or be equivalent.
Hence, the rank inequalities also hold for stable linear
conversions from one $[n^I,\ki=\lambda\kf,\ell]$ MDS array
code to $\lambda$  different $[n^F,\kf,\ell]$ MDS array codes.
\end{remark}

Lemma~\ref{thm:rank-family} gives a family of rank inequalities satisfied by the optimal linear conversion, which includes the constraints used in~\cite{wang2026lower} as special cases.

When $\ri\le\kf$ and $\rf<\kf$, the constraint corresponding to $s=0$ recovers the first constraint used in the proof of Theorem 2 of ~\cite{wang2026lower}. 
Moreover, for every $1\le s\le\lambda-1$, we have $q_s=\ri$, and \eqref{eq:rank-family} reduces to $x\ge\frac{\ki(\lambda-s)\rf}{\ri+(\lambda-s)\rf}\ell.$
Since the right-hand side decreases with $s$, the constraint corresponding to  $s=1$ is the strongest among all $s\ge1$. 
Hence, all inequalities with $s\ge2$ are redundant, and the resulting LP reduces exactly to the two constraints used in Theorem~2 of~\cite{wang2026lower}.
Moreover, this lower bound is tight in the regime $\ri\le\kf$ and $\rf<\kf$, as it is matched by the construction
of Maturana and Rashmi~\cite{Maturana2022BandwidthCO}.

When $\rf<\kf<\ri$, the constraints corresponding to $s=0$ and $s=1$ coincide with the two constraints underlying Theorem~3 of~\cite{wang2026lower}, while  the constraints with $s\ge2$ provide additional constraints that can strengthen the resulting lower bound.
\subsection{Improved Lower Bounds for $\rf<\kf<\ri$}
\label{subsec:improved-bounds}

We now use the family of rank inequalities in Lemma~\ref{thm:rank-family} to derive improved lower bounds for the
remaining regime $ \rf<\kf<\ri.$
% Since $R=x+y$, it suffices to minimize $x+y$ subject to
% \eqref{eq:rank-family} and
% \[
%     0\le x\le \ki\ell,
%     \qquad
%     0\le y\le \ri\ell.
% \]

% We first state the resulting lower bound.

\begin{thm}
\label{thm:improved-lower-bound}
For every stable MDS $[\nii,\ki=\lambda\kf;\nf,\kf;\ell]$ convertible code with a linear conversion procedure satisfying $\rf<\kf<\ri$, define $m=\left\lfloor\frac{\ri-1}{\kf}\right\rfloor,$ $\delta=\ri-m\kf,$ and $ h=\lambda-m$. 
Further, let $ \Delta:=\ri(\kf-\rf)+\ki\rf,$ $T:=\ri+(h-1)\rf,$ and define $ \Gamma=\frac{\ri(m\kf+h\rf)}{\ki\delta}.$
Then the read bandwidth cost satisfies
    $$R\ge\begin{cases}
\lambda\, \rf \, \ell, & \text{if } \ \ki \le \ri,\\[2mm]
 \frac{(\ki)^2  \rf}{\Delta} \, \ell, & \text{if } \ \ki > \ri,\Gamma\ge  1,\\[2mm]
 \displaystyle
    \left(
    \frac{\ki(h-1)\rf}{T}
    +
    \frac{\ri\rf(m\kf+h\delta)}
    {\delta T}
    \right)\ell,
 & \text{if }\  \ki > \ri,\Gamma<1 .
\end{cases}$$
\end{thm}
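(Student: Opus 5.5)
The plan is to forget the code structure entirely and treat the statement as a linear program in the two variables $x,y$. Lemma~\ref{thm:rank-family} gives the constraint family \eqref{eq:rank-family} for $s\in[\lambda]$, together with $x,y\ge 0$, and $R=x+y$. We then bound $\min(x+y)$ by exhibiting, in each case, nonnegative multipliers (a dual feasible solution) whose combination of one or two members of the family has left side dominated by $x+y$. That combination gives the stated right-hand side by weak duality.

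\textbf{Rewriting the family.} The first step is to make the family explicit using $m,\delta,h$. Since $\ri-1<(m+1)\kf$ and $m\kf<\ri$:
\begin{itemize}
\item for $s\le m$ we have $q_s=s\kf$, and the $y$-coefficient is $1-s\kf/\ri$;
\item in particular, for $s=m$ the $y$-coefficient equals $\delta/\ri$ and the $x$-coefficient equals $(m\kf+h\rf)/\ki$;
\item for $m+1\le s\le\lambda-1$ we have $q_s=\ri$, the $y$-coefficient vanishes, and the constraint reads $x\ge \ki(\lambda-s)\rf\ell/(\ri+(\lambda-s)\rf)$.
\end{itemize}
The right side of the last inequality decreases in $s$, so the strongest of these is $s=m+1$ (available when $m\le\lambda-2$, i.e.\ $h\ge2$), namely $x\ge \ki(h-1)\rf\ell/T$. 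Also $\Gamma$ is exactly the ratio of the $x$-coefficient to the $y$-coefficient of the $s=m$ constraint. This is what will decide which constraints are tight.

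\textbf{Case I ($\ki\le\ri$).} Here only $s=0$ is needed: $\frac{\rf}{\kf}x+y\ge\lambda\rf\ell$. Since $\rf<\kf$ and $x\ge0$, this gives $x+y\ge\lambda\rf\ell$ directly.

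\textbf{Case II ($\ki>\ri$, $\Gamma\ge1$).} In the $s=m$ constraint, $x$ is then the variable with the larger coefficient. I would combine the $s=0$ constraint, whose $y$-coefficient $1$ exceeds its $x$-coefficient $\rf/\kf$, with the $s=m$ constraint, whose $x$-coefficient dominates. I would choose multipliers $\mu_0,\mu_m\ge0$ making both combined coefficients equal to $1$. Nonnegativity of the multipliers should be equivalent to $\Gamma\ge1$ (and the $s=0$ slope condition). The claimed value $\ki^2\rf\ell/\Delta$ should drop out after simplifying with $\ri=m\kf+\delta$ and $\ki=(m+h)\kf$. If that algebra does not reproduce $\Delta$, I would instead try the pair $(s=0,s=1)$, which matches the earlier bound of Theorem~3 in \cite{wang2026lower}.

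\textbf{Case III ($\ki>\ri$, $\Gamma<1$).} Now the $s=m$ constraint has $y$-coefficient larger than its $x$-coefficient. I would pair it with the pure-$x$ constraint $s=m+1$, taking
\[
\mu_m=\frac{\ri}{\delta},\qquad
\mu_{m+1}=1-\frac{\ri(m\kf+h\rf)}{\ki\delta}=1-\Gamma>0 .
\]
The combination then yields
\[
x+y\ge \frac{\ri}{\delta}h\rf\ell+(1-\Gamma)\frac{\ki(h-1)\rf}{T}\ell ,
\]
which I expect to simplify to the stated expression using $T=\ri+(h-1)\rf$.

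\textbf{Edge case and main obstacle.} The case $h=1$ (so $m=\lambda-1$ and the $s=m+1$ constraint is absent) needs a separate check; there $x\ge0$ plays the role of the $x$-only constraint. The main obstacle I expect is not feasibility but the bookkeeping: confirming that the chosen pairs give exactly the three closed forms, and that the sign of each multiplier is governed precisely by the threshold $\Gamma=1$.
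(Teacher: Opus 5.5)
Your proposal is correct, and it takes a different route from the paper. You certify the lower bound by weak duality, while the paper solves the LP in the primal. The paper shows that all lines with $0\le s\le m$ pass through a common point $P$ with strictly decreasing slopes, so only $\phi_m$ (for $x\le x_{\rm int}$) and $\phi_0$ (for $x\ge x_{\rm int}$) matter. It then collapses the saturated constraints to $x\ge x_{\rm v}$ and locates the optimum at $P$ or $Q=(x_{\rm v},y_{\rm v})$ by comparing the slopes $-\Gamma$ and $-\rf/\kf$ with $-1$, after checking that $P$ and $Q$ satisfy the box constraints.

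Your multipliers work, and the algebra you left open closes as follows.
\begin{itemize}
\item In Case~II the system has the unique solution
\[
\mu_m=\frac{\lambda\ri(\kf-\rf)}{m\Delta}>0,\qquad \mu_0=\frac{\ki\delta(\Gamma-1)}{m\Delta}.
\]
So $\mu_0\ge0$ exactly when $\Gamma\ge1$. Using $\Delta+(\kf-\rf)(h\kf-\delta)=\ki\kf$, you get $\mu_0\lambda\rf\ell+\mu_m h\rf\ell=\frac{(\ki)^2\rf}{\Delta}\ell$.
\item In Case~III your expression reduces to the stated bound because $hT-(h-1)(m\kf+h\rf)=m\kf+h\delta$.
\end{itemize}

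What your route buys is economy for the inequality the theorem actually asserts. You never need the concurrency and envelope analysis showing the other $s$ are redundant. You also never need the box constraints $x\le\ki\ell$, $y\le\ri\ell$, or the auxiliary fact $\delta>\rf$. What the paper's primal analysis buys in addition is that the LP value equals the bound, so this family of inequalities cannot give more. It also produces the explicit optimal pairs $(x^*,y^*)$ of Remark~\ref{rem:optimal-read-pairs} on which the constructions are built. In your framework you would recover these by complementary slackness: the tight constraints are $s=0,m$ in Case~II and $s=m,m+1$ in Case~III, followed by a feasibility check of the resulting point.

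Two minor remarks. First, the fallback pair $(0,1)$ is unnecessary. It would only work when the $s=1$ coefficient ratio is at least $1$, which is stronger than $\Gamma\ge1$ once $m\ge2$, since these ratios increase with $s$. Second, your $h=1$ edge case is vacuous in Case~III. With $h=1$ one checks $\ri(m\kf+\rf)>\ki\delta$, so $\Gamma>1$. Your treatment via $x\ge0$ is harmless anyway.
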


\begin{proof}
It suffices to consider a linear conversion procedure that minimizes the read bandwidth.
By Lemma~\ref{thm:rank-family}, for every $s\in[\lambda]$, we have 
\begin{equation} 
    \frac{q_s+(\lambda-s)\rf}{\ki}x
    +
    \left(1-\frac{q_s}{\ri}\right)y
    \ge
    (\lambda-s)\rf\ell,
    \label{eq:family-for-lp}
\end{equation}
where $q_s=\min\{s\kf,\ri\}$.
We consider the linear program
\begin{equation}\label{eq:split-lp}
    \begin{array}{rl}
    \text{minimize} & R = x+y \\
    \text{subject to} & \eqref{eq:family-for-lp}, \\
                      &  0 \le x \le \ki\ell,\  0 \le y \le \ri\ell.
    \end{array}
    \end{equation}
The optimal value of this LP provides a lower bound on the read bandwidth.
% It suffices to consider a minimum-read linear conversion procedure.
% Let $x$ and $y$ denote the total amounts of data read from the
% unchanged and retired symbols, respectively, so that
% \[
%     R=x+y.
% \]

We first consider the case  $\ki\le\ri$.
Taking $s=0$ in \eqref{eq:family-for-lp} gives $\frac{\rf}{\kf}x+y \ge\lambda\rf\ell.$
 Since $\rf<\kf$ and $x\ge0$, we have
 \[
     R=x+y
     =
     \left(\frac{\rf}{\kf}x+y\right)
     +
     \left(1-\frac{\rf}{\kf}\right)x
     \ge
     \lambda\rf\ell.
 \]
 Moreover, the point $(x,y)=(0,\lambda\rf\ell)$ is feasible for  (\ref{eq:split-lp}).
More specifically, for $s=0$, equality holds in \eqref{eq:family-for-lp}.
For $1\le s\le\lambda-1$,  we have \(q_s=s\kf\), and substituting this point into \eqref{eq:family-for-lp}
gives
\[
    \left(1-\frac{s\kf}{\ri}\right)
    \lambda\rf\ell
    \ge
    (\lambda-s)\rf\ell.
\]
This inequality is equivalent to \(\lambda\kf\le\ri\),which holds by assumption.
The variable bounds are also satisfied since
$0<\lambda\rf\ell<\ki\ell\le\ri\ell$.
Thus, the optimal value of \eqref{eq:split-lp} is
$\lambda\rf\ell$, proving the first case.

We next  consider $\ki>\ri$.
Since  $\kf<\ri<\ki=\lambda\kf,$  by the definition $m=\left\lfloor\frac{\ri-1}{\kf}\right\rfloor$ we have  $1\le m\le\lambda-1$ and  $ m\kf<\ri\le(m+1)\kf.$
Recall that $ \ri=m\kf+\delta$ and $h=\lambda-m.$
We now simplify the constraints in \eqref{eq:family-for-lp}.
For $0\le s\le m$, we have $q_s=s\kf$,  and \eqref{eq:family-for-lp}  becomes
\begin{equation}
    \frac{s\kf+(\lambda-s)\rf}{\ki}x +
    \frac{\ri-s\kf}{\ri}y
    \ge
    (\lambda-s)\rf\ell.
    \label{eq:family-unsaturated}
\end{equation}
Let $F_s(x,y)=\frac{s\kf+(\lambda-s)\rf}{\ki}x+\frac{\ri-s\kf}{\ri}y-(\lambda-s)\rf\ell$. 
Since $F_s(x,y)=F_0(x,y)+s(F_1(x,y)-F_0(x,y))$, all equality lines  $ F_s(x,y)=0$ where $0\le s\le m,$ pass through the common intersection of \(F_0(x,y)=0\) and \(F_1(x,y)=0\).
A direct calculation gives this point as
    $$P=(x_{\rm int},y_{\rm int})=\left(\frac{\ki\rf(\ki-\ri)}{\Delta}\ell,\frac{\ki\rf\ri}{\Delta}\ell\right).$$
For \(0\le s\le m\), write the equality line in \eqref{eq:family-unsaturated} as $ y=\phi_s(x)$. Its slope is $\gamma_s=-\frac{\ri\bigl(s\kf+(\lambda-s)\rf\bigr)}{\ki(\ri-s\kf)}.$
For \(0\le s\le m-1\),
\begin{equation*}
    \gamma_{s+1}-\gamma_s=-\frac{
        \ri\bigl(
        \lambda\kf\rf+\ri(\kf-\rf)
        \bigr)
    }{
        \ki(\ri-s\kf)
        (\ri-(s+1)\kf)
    } 
    <0.
\end{equation*}
Thus the slopes are strictly decreasing: $ \gamma_0>\gamma_1>\cdots>\gamma_m.$
Since all these lines pass through \(P\), it follows that
\begin{equation}
    \max_{0\le s\le m}\phi_s(x)
    =\begin{cases}
        \phi_m(x), & x\le x_{\rm int},\\
        \phi_0(x), & x\ge x_{\rm int}.
    \end{cases}
    \label{eq:unsaturated-envelope}
\end{equation}
For $m+1\le s\le\lambda-1$, we have $q_s=\ri$, and
\eqref{eq:family-for-lp} reduces to $ x\ge\frac{\ki(\lambda-s)\rf}{\ri+(\lambda-s)\rf}\ell.$
The right-hand side is decreasing in \(s\). 
Hence, if the range \(m+1\le s\le\lambda-1\) is nonempty, its strongest constraint is obtained at \(s=m+1\). 
Together with \(x\ge0\), all saturated constraints are equivalent to $x\ge x_{\rm v}=\frac{\ki(h-1)\rf}{T}\ell,$ When \(m=\lambda-1\), we have \(h=1\), and hence $x_{\rm v}=0$.
A direct calculation yields
\begin{equation}
    x_{\rm int}-x_{\rm v}
    =
    \frac{
        \ki\ri\rf(\kf-\delta)
    }{
        \Delta T
    }\ell
    \ge0,
    \label{eq:x-int-v}
\end{equation}
where the last inequality follows from \(1\le\delta\le\kf\).
Consequently, the constraints in \eqref{eq:split-lp}
are equivalent to $x_{\rm v}\le x\le\ki\ell,$ $0\le y\le\ri\ell,$
together with
\[
    y\ge
    \begin{cases}
        \phi_m(x),
            &x_{\rm v}\le x\le x_{\rm int},\\
        \phi_0(x),
            &x\ge x_{\rm int}.
    \end{cases}
\]
The two relevant slopes are
\begin{equation}
    \gamma_m =-\frac{\ri(m\kf+h\rf)}{\ki\delta}=-\Gamma,\qquad\gamma_0= -\frac{\lambda\rf}{\ki}=-\frac{\rf}{\kf}.
\label{eq:two-slopes}
\end{equation}
Since \(\rf<\kf\), we have $-1<\gamma_0<0.$ 
The level lines of the objective $x+y$ have slope $-1$.

We now distinguish two cases.
If \(\Gamma\ge1\), then $\gamma_m\le -1<\gamma_0.$
Moreover, we have $\frac{x_{\rm int}}{\ki\ell}=\frac{\rf(\ki-\ri)}{\Delta}<1,$ and $\frac{y_{\rm int}}{\ri\ell}=\frac{\ki\rf}{\Delta}<1.$
Thus, the point $P=(x_{\rm int},y_{\rm int})$ lies in the feasible region.
Hence, as illustrated in Figure~\ref{fig:lp-gamma-geq-one}, the minimum of $x+y$ is attained at $ P=(x_{\rm int},y_{\rm int})=\left(
        \frac{\ki\rf(\ki-\ri)}{\Delta}\ell,\,
        \frac{\ki\rf\ri}{\Delta}\ell
    \right).$

Therefore,
\[
\begin{aligned}
    R
    &\ge x_{\rm int}+y_{\rm int}\\
    &=
    \frac{\ki\rf(\ki-\ri)+\ki\rf\ri}{\Delta}\ell\\
    &=
    \frac{(\ki)^2\rf}{\Delta}\ell,
\end{aligned}
\]
which proves the second case.

If \(\Gamma<1\), then  $-1<\gamma_m<\gamma_0<0.$
We first note that $\frac{x_{\rm v}}{\ki\ell}=\frac{(h-1)\rf}{T}<1$.
Moreover, $\Gamma<1$ implies $\delta>\rf$. 
Indeed, if $\delta\le\rf$, then
\[
    \ri(m\kf+h\rf)
    \ge
    (m\kf+\delta)(m\kf+h\delta)
    >
    \lambda\kf\delta
    =
    \ki\delta,
\]
contradicting $\Gamma<1$. Hence $\delta T-\rf(m\kf+h\delta) =\ri(\delta-\rf)>0,$
and therefore
$\frac{y_{\rm v}}{\ri\ell}
    =
    \frac{\rf(m\kf+h\delta)}{\delta T}
    <1.$
Thus, the point $Q=(x_{\rm v},y_{\rm v})$ lies in the feasible region.
Hence, as illustrated in Figure~\ref{fig:lp-gamma-lt-one},  the minimum is attained at the leftmost feasible point $$Q=(x_{\rm v},y_{\rm v})=\left(\frac{\ki(h-1)\rf}{T}\ell,\frac{\ri\rf(m\kf+h\delta)}{\delta T}\ell\right).$$
It follows that 
\[
    R
    \ge
    x_{\rm v}+y_{\rm v}
    =
    \left(
    \frac{\ki(h-1)\rf}{T}
    +
    \frac{\ri\rf(m\kf+h\delta)}
         {\delta T}
    \right)\ell,
\] 
which proves the third case.
\end{proof}
\begin{figure*}[t]
\centering

%==================== subfigure (a) ====================
\begin{subfigure}[t]{0.48\textwidth}
\centering
\resizebox{\linewidth}{!}{%
\begin{tikzpicture}[
    x=5.5cm,
    y=5.5cm,
    >=Latex,
    every node/.style={font=\scriptsize}
]

% Parameters: (lambda,k^F,r^F,r^I,ell)=(4,6,1,8,1).
% k^I=24, Gamma=3/2.
% Actual coordinates:
% L_0: y=4-x/6; L_m: y=12-3x/2.
% P=(6,3), Q=(24/5,24/5), R^star=9.
% Drawing coordinates: X=x/24, Y=y/8.

% Identical bounding boxes for both panels
\path[use as bounding box]
    (-0.23,-0.12) rectangle (1.16,1.15);

\coordinate (P) at (0.25,0.375);
\coordinate (Q) at (0.2,0.6);

% Feasible region
\fill[
    pattern={Lines[angle=45,distance=4pt,line width=0.3pt]},
    pattern color=blue!60
]
    (Q) -- (0.2,1) -- (1,1) --
    (1,0) -- (P) -- cycle;

% Upper, right, and vertical boundaries
\draw[thin,gray] (0.2,1) -- (1,1) -- (1,0);
\draw[thick] (Q) -- (0.2,1);

% Guides
\draw[densely dotted] (0,1) -- (0.2,1);
\draw[densely dotted] (0.2,0) -- (Q);
\draw[densely dotted] (0.25,0) -- (P);

% Constraint and objective lines
\begin{scope}
    \clip (0,0) rectangle (1,1);

    % L_0
    \draw[thick] (0,0.5) -- (1,0);

    % L_m
    \draw[thick] (0,1.5) -- ({1/3},0);

    % x+y=R^star
    \draw[dashed,thick] (0,1.125) -- (0.375,0);
\end{scope}

% Axes
\draw[->] (0,0) -- (1.07,0) node[right] {$x$};
\draw[->] (0,0) -- (0,1.07) node[above] {$y$};

% Axis labels
\node[left] at (0,1) {$r^I\ell$};
\node[left] at (0,0.5) {$\lambda r^F\ell$};
\node[below left] at (0.2,0) {$x_{\rm v}$};
\node[below right] at (0.25,0) {$x_{\rm int}$};
\node[below] at (1,0) {$k^I\ell$};

% Line labels: no white backgrounds
\node[above right] at (0.73,0.135) {$\mathcal L_0$};
\node[right] at (0.13,0.915) {$\mathcal L_m$};
\node[right] at (0.35,0.075) {$x+y=R^\star$};

% Optimal point
\fill[blue!70!black] (P) circle (1.7pt);
\node[above right,text=blue!70!black] at (P) {$P$};

\end{tikzpicture}%
}
\caption{$\Gamma\ge1$.}
\label{fig:lp-gamma-geq-one}
\end{subfigure}
\hfill
%==================== subfigure (b) ====================
\begin{subfigure}[t]{0.48\textwidth}
\centering
\resizebox{\linewidth}{!}{%
\begin{tikzpicture}[
    x=0.6875cm,
    y=0.6875cm,
    >=Latex,
    every node/.style={font=\scriptsize}
]

% Schematic coordinates; not a numerical parameter instance.
% L_0: y=4-x/2.
% L_m: y=5-3x/4.
% P=(4,2), Q=(1.5,3.875).
% Optimal objective line: x+y=5.375.

% Same physical bounding box as the previous left panel
\path[use as bounding box]
    (-1.84,-0.96) rectangle (9.28,9.2);

\coordinate (P) at (4,2);
\coordinate (Q) at (1.5,3.875);

% Feasible region
\fill[
    pattern={Lines[angle=45,distance=4pt,line width=0.3pt]},
    pattern color=blue!60
]
    (Q) -- (1.5,8) -- (8,8) --
    (8,0) -- (P) -- cycle;

% Upper, right, and vertical boundaries
\draw[thin,gray] (1.5,8) -- (8,8) -- (8,0);
\draw[thick] (Q) -- (1.5,8);

% Guides
\draw[densely dotted] (0,8) -- (1.5,8);
\draw[densely dotted] (1.5,0) -- (Q);
\draw[densely dotted] (4,0) -- (P);

% Constraint lines
\draw[thick] (0,4) -- (8,0);
\draw[thick] (0,5) -- ({20/3},0);

% Optimal objective line through Q
\draw[dashed,thick] (0,5.375) -- (5.375,0);

% Axes
\draw[->] (0,0) -- (8.56,0) node[right] {$x$};
\draw[->] (0,0) -- (0,8.56) node[above] {$y$};

% Axis labels
\node[left] at (0,8) {$r^I\ell$};
\node[left] at (0,4) {$\lambda r^F\ell$};
\node[below] at (1.5,0) {$x_{\rm v}$};
\node[below] at (4,0) {$x_{\rm int}$};
\node[below] at (8,0) {$k^I\ell$};

% Line labels
\node[above right] at (6.2,0.9) {$\mathcal L_0$};
\node[above right] at (2.5,3.125) {$\mathcal L_m$};
\node[below left] at (3.3,2.075) {$x+y=R^\star$};

% Intersection and optimal point
\fill (P) circle (1.5pt);
\node[above right] at (P) {$P$};

\fill[blue!70!black] (Q) circle (1.8pt);
\node[above right,text=blue!70!black] at (Q) {$Q$};

\end{tikzpicture}%
}
\caption{$\Gamma<1$.}
\label{fig:lp-gamma-lt-one}
\end{subfigure}

\caption{Geometric interpretation of the linear program in \cref{thm:improved-lower-bound}.
The lines $\mathcal L_0$ and $\mathcal L_m$ are the equality boundaries corresponding to $s=0$ and $s=m$, respectively.
The dashed lines represent $x+y=R^\star$, where $R^\star$ is the optimal objective value.
}
\label{fig:lp-two-cases}
\end{figure*}
\begin{remark}
\label{rem:optimal-read-pairs}
The proof of \cref{thm:improved-lower-bound} also identifies
an optimal solution $(x^*,y^*)$ to the LP
\eqref{eq:split-lp}, given by
\[
    (x^*,y^*)
    =
    \begin{cases}
    \bigl(0,\lambda\rf\ell\bigr),
    & \ki\le\ri,\\[0.6em]
    \left(
    \dfrac{\ki\rf(\ki-\ri)}{\Delta}\ell,
    \dfrac{\ki\rf\ri}{\Delta}\ell
    \right),
    & \kf<\ri<\ki,\ \Gamma\ge1,\\[1em]
    \left(
    \dfrac{\ki(h-1)\rf}{T}\ell,
    \dfrac{\ri\rf(m\kf+h\delta)}{\delta T}\ell
    \right),
    & \kf<\ri<\ki,\ \Gamma<1.
    \end{cases}
\]
These three cases will be used in
Section~\ref{sec:constructions} to construct matching convertible codes.
\end{remark}

\section{Explicit Constructions Attaining the Lower Bounds}\label{sec:constructions}

In this section, we present explicit constructions attaining the lower bounds derived in \cref{thm:improved-lower-bound}.
For each construction, we choose the subpacketization $\ell$ so that the prescribed number of subsymbols read from each node is an integer.

In each construction, all $\lambda$ final codewords use
the same systematic generator matrix. 
Specifically, we construct
 \[
     G^I=[I_{\ki\ell}\mid \Boverall],
    \ 
     G_t^F=G^F=[I_{\kf\ell}\mid \Coverall]\  \text{for} \ t\in[\lambda],
 \]
 where $B=(\Boverall_{i,j})_{i\in[\ki],j\in[\ri]}$ and $C=(\Coverall_{i,j})_{i\in[\kf],j\in[\rf]}$ with $\Boverall_{i,j},\Coverall_{i,j}\in\F_q^{\ell\times\ell}$. 
Thus, the matrices $C_t$ in Lemma~\ref{lem:systematic-representation} all equal $\Coverall$.
For each construction, we establish the block-superregular property of $\Boverall$ and $\Coverall$, ensuring that both codes are MDS, and describe a linear conversion procedure that preserves all $\ki$ systematic symbols and attains the corresponding lower bound on the read bandwidth.
% For simplicity, we assume uniform downloads among unchanged systematic nodes and among retired parity nodes.

% =====================================================================

\subsection{Case I: $\ri\ge \ki$}
\label{subsec:case-I}

In this case, the optimal read bandwidth is $R=\lambda\rf\ell.$ 
We attain this bound by downloading subsymbols only from the retired parity nodes.
Choose $\ell$ such that $\ri\mid\lambda\rf\ell$, and set $\sigma=\frac{\lambda\rf\ell}{\ri}$, $\tau=\ell-\sigma.$
Since $\rf<\kf$ and $\ri\ge\ki=\lambda\kf$,
we have $0<\sigma<\ell$.
Each retired parity node contributes $\sigma$ subsymbols, while no systematic subsymbol is read.

\subsubsection{Construction of the final code $C^F$}
Choose a scalar Cauchy matrix $C\in\F_q^{\kf\ell\times\rf\ell}$, and partition it as  $C=\begin{bmatrix} C_{i,j} \end{bmatrix}_{i\in[\kf],j\in[\rf]}$, where $C_{i,j} \in\F_q^{\ell\times\ell}.$  
Every square block submatrix of $C$ is a square submatrix of a scalar Cauchy matrix and is therefore nonsingular.
Thus, $G^F=[I_{\kf\ell}\mid C]$ generates an MDS array code.

% To construct the initial code, we prepare  the following materials.
Define $F=I_\lambda\otimes C\in\F_q^{\ki\ell\times\ri\sigma},$ where $\ri\sigma=\lambda\rf\ell$.
%%%%%%%%有条件更新！！！！！
Then $F$ is the block-diagonal replication of the final parity transform across the $\lambda$ final groups.
For $i\in[\kf]$, let $C_i=(C_{i,j})_{j\in[\rf]}$ denote the $i$-th block row of $C$.
Let $\mathcal{S}=[\sigma]$ and $\overline{\mathcal{S}}=[\ell]\backslash[\sigma]$.
We designate the first $\sigma$ rows of each $C_i$
as anchor rows and define
$$\Aone_0=\begin{bmatrix}
    C_0[\mathcal S;:]\\
    \vdots\\
    C_{\kf-1}[\mathcal S;:]
\end{bmatrix}\in\F_q^{\kf\sigma\times\rf\ell},\ \ \Mone_0=\begin{bmatrix}
    C_0[\overline{\mathcal S};:]\\
    \vdots\\
    C_{\kf-1}[\overline{\mathcal S};:]
\end{bmatrix}\in\F_q^{\kf\tau\times\rf\ell}.$$
Since $\kf\sigma=\frac{\lambda\kf\rf\ell}{\ri} \le\rf\ell$ and every square submatrix of $C$ is nonsingular, $\Aone_0$ has full row rank.
Let $\Aone=I_\lambda\otimes \Aone_0\in\F_q^{\ki\sigma\times\ri\sigma}$ and $  \Mone=I_\lambda\otimes  \Mone_0\in\F_q^{\ki\tau\times\ri\sigma}$.
Then  $\rank\Aone=\ki\sigma.$

\subsubsection{Construction of the initial code $C^I$}
Choose a scalar Cauchy matrix $E\in\F_q^{\ki\sigma\times\ri\sigma},$ and partition it into $\ki\times\ri$ blocks $ E=(E_{i,j})_{i\in[\ki],j\in[\ri]}$, where  $E_{i,j}\in\F_q^{\sigma\times\sigma}.$
Both $\Aone$ and $E$ have full row rank.
Complete their rows to bases of $\F_q^{\ri\sigma}$, obtaining invertible matrices $\widehat A$ and $\widehat E$ whose first $\ki\sigma$ rows are
$\Aone$ and $E$, respectively.
Then $L=\widehat A^{-1}\widehat E\in\mathrm{GL}_{\ri\sigma}(\F_q)$ satisfies $\Aone L=E$.
Set $Z=\Mone L$ and $D=FL$, and partition them as
\[
    Z=(Z_{i,j})_{i\in[\ki],\,j\in[\ri]},
    \qquad
    D=(D_{i,j})_{i\in[\ki],\,j\in[\ri]},
\]
where $Z_{i,j}\in\F_q^{\tau\times\sigma}$ and
$D_{i,j}\in\F_q^{\ell\times\sigma}$.
Since $\Aone$ and $\Mone$ are obtained by selecting,
respectively, the first $\sigma$ rows and the remaining
$\tau$ rows from each of the $\ki$ block rows of $F$,
each containing $\ell$ rows, we have
$D_{i,j}=
    \begin{bmatrix}
        E_{i,j}\\
        Z_{i,j}
    \end{bmatrix}.$
Choose a scalar Cauchy matrix $ Q\in\F_q^{\ki\tau\times\ri\tau},$
and partition it as $Q=(Q_{i,j})_{i\in[\ki],j\in[\ri]},$ where $ Q_{i,j}\in\F_q^{\tau\times\tau}.$
Define
\begin{equation}
\label{eq:I-Bij}
    B_{i,j}=
    \begin{bmatrix}
        E_{i,j} & 0\\
        Z_{i,j} & Q_{i,j}
    \end{bmatrix},
\end{equation}
and let $B=(B_{i,j})_{
        i\in[\ki],\,j\in[\ri]}.$
The initial code is generated by $ G^I=[I_{\ki\ell}\mid B].$ 
Figure~\ref{fig:caseI-schematic-compact} illustrates
the construction, and the same block-triangular structure is also used, up to coordinate permutations, in Cases~II and~III.

To verify the MDS property, let $I\subseteq[\ki]$ and
$J\subseteq[\ri]$ satisfy $|I|=|J|\ge1$.  
After permuting its rows and columns to group the first
$\sigma$ coordinates of each selected block before the
remaining $\tau$ coordinates, $B[I;J]$ becomes
    $\begin{bmatrix}
        E[I;J] & 0\\
        Z[I;J] & Q[I;J]
    \end{bmatrix}.$  
Both $E[I;J]$ and $Q[I;J]$ are square submatrices of scalar Cauchy matrices and are therefore nonsingular.  Thus, $B[I;J]$ is nonsingular, so $B$ is block superregular and the initial code is MDS.

\subsubsection{Conversion and field size}

Download the first $\sigma$ coordinates from each
retired parity node and read no systematic subsymbol.
The matrices in Lemma~\ref{lem:conversion-characterization} then satisfy $\widetilde C=F$ and $\widetilde B=D=FL.$
Since $L$ is invertible,
\[
    \col(\widetilde C)
    =
    \col(F)
    =
    \col(FL)
    =
    \col(\widetilde B).
\]
Hence,  Lemma~\ref{lem:conversion-characterization} ensures that these downloads permit a linear conversion
preserving all $\ki$ systematic symbols. 
Its read bandwidth is $R=\ri\sigma=\lambda\rf\ell,$ which attains the Case~I lower bound.

To construct the scalar Cauchy matrices $C$, $E$, and $Q$, it suffices to choose a prime power $q$ satisfying
\[ q>
    \max\bigl\{
        (\kf+\rf)\ell,\,
        (\ki+\ri)\sigma,\,
        (\ki+\ri)\tau
    \bigr\}.
\]
Hence one may take $ q=O\bigl((\ki+\ri)\ell\bigr).$

% =====================================================================画个图解释一下

\begin{figure}[t]
\centering
\resizebox{\columnwidth}{!}{%
\begin{tikzpicture}[
    >=Latex,
    font=\scriptsize,
    node distance=6mm and 7mm,
    box/.style={
        draw,
        rounded corners,
        align=center,
        inner sep=3pt
    },
    arr/.style={->, thick}
]

% Left: C
\node[box] (C) {$\begin{aligned}
    &C=(C_i)_{i\in[\kf]}\\
    &C_i=(C_{i,j})_{j\in[\rf]}
\end{aligned}$};
\node[above=1mm of C] {\scriptsize $C$ is a  Cauchy matrix};
%%%%加上F的信息
\node[box, below=4mm of C] (F) {$
F=I_\lambda\otimes C
$};

\draw[arr] (C) -- (F);

% Top row
\node[box, right=of C, yshift=8mm] (A0) {$
A_0=
\begin{bmatrix}
C_0[\mathcal S;:]\\
\vdots\\
C_{\kf-1}[\mathcal S;:]
\end{bmatrix}
$};

\node[box, right=of A0] (A1) {$
A=I_\lambda\otimes A_0
$};

\node[box, right=of A1] (E) {$
\begin{aligned}
E&=AL\\
 &= (E_{i,j})
\end{aligned}
$};
\node[above=1mm of E] {\scriptsize $E$ is a  Cauchy matrix};

% Bottom row
\node[box, right=of C, yshift=-8mm] (M0) {$
M_0=
\begin{bmatrix}
C_0[\overline{\mathcal S};:]\\
\vdots\\
C_{\kf-1}[\overline{\mathcal S};:]
\end{bmatrix}
$};

\node[box, right=of M0] (M1) {$
M=I_\lambda\otimes M_0
$};

\node[box, right=of M1] (Z) {$
\begin{aligned}
Z&=ML\\
 &= (Z_{i,j})
\end{aligned}
$};

% Middle-right: D
\node[box, right=14mm of $(E)!0.5!(Z)$] (D) {$
\begin{aligned}
D&=FL\\
 &= (D_{i,j}),
\end{aligned}\ D_{i,j}=
\begin{bmatrix}
E_{i,j}\\
Z_{i,j}
\end{bmatrix}
$
};

% Rightmost: Bij
\node[box, right=9mm of D] (B) {$
B_{i,j}=
\begin{bmatrix}
E_{i,j} & 0\\
Z_{i,j} & Q_{i,j}
\end{bmatrix}
$};
\node[below=1mm of B] {\scriptsize $Q=(Q_{i,j})$ is a  Cauchy matrix};

% arrows
\draw[arr] (C.east)-- (A0.west);
\draw[arr] (C.east)--(M0.west);

\draw[arr] (A0) -- (A1);
\draw[arr] (A1) -- node[above] {$L$} (E);

\draw[arr] (M0) -- (M1);
\draw[arr] (M1) -- node[below] {$L$} (Z);

\draw[arr] (E.east) -- (D.west);
\draw[arr] (Z.east) -- (D.west);

\draw[arr] (D) -- (B);

% small labels
\node[above=1mm of A0] {\scriptsize anchor rows};
\node[below=1mm of M0] {\scriptsize remaining rows};

\end{tikzpicture}%
}
\caption{Construction of the initial parity blocks in Case~I.
The matrices $C$, $E$, and $Q$ are scalar Cauchy matrices.
An invertible matrix $L$ satisfying $\Aone L=E$
gives $Z=\Mone L$ and $D=FL$.
The blocks of $D$ are then completed with those of $Q$
to form $B_{i,j}$.}
\label{fig:caseI-schematic-compact}
\end{figure}
% =====================================================================

\subsection{Case II: $\kf<\ri<\ki$ and $\Gamma\ge1$}
\label{subsec:case-II}
Recall that $\Delta=\ri(\kf-\rf)+\ki\rf$.
In this case, the optimal read bandwidth is
$R=\frac{(\ki)^2\rf}{\Delta}\ell$.
Choose $\ell$ such that $\Delta\mid\ell$, and set
$\beta=\frac{\rf(\ki-\ri)}{\Delta}\ell$, $\sigma=\frac{\ki\rf}{\Delta}\ell.$
Each unchanged systematic node contributes $\beta$ subsymbols, while each retired parity node contributes $\sigma$ subsymbols.
Define
\[
   u=\ell-\beta=\frac{\ri\kf}{\Delta}\ell,\ 
    w=\sigma-\beta=\frac{\rf\ri}{\Delta}\ell,\ 
    \tau=\ell-\sigma=\frac{\ri(\kf-\rf)}{\Delta}\ell.
\]
Since $\rf<\kf$ and $\ri<\ki$, we have
$0<\beta<\sigma<\ell$ and $0<w<u$.
The parameters satisfy
\[
    u=w+\tau,
    \qquad
    \sigma=w+\beta,
    \qquad
    \kf w=\rf u,
    \qquad
    \ki w=\ri\sigma.
\]
In particular, $\lambda\rf u=\ri\sigma$.
\subsubsection{Construction of the final code $C^F$}
Choose a scalar Cauchy matrix $N =(n_{i,j})_{i\in[\kf],j\in[\rf]}\in\F_q^{\kf\times\rf},$ and set $\widehat C=N\otimes I_\ell.$
For any $I\subseteq[\kf]$ and $J\subseteq[\rf]$ with $|I|=|J|\ge 1$,  the matrix $\widehat C[I;J]=N[I;J]\otimes I_\ell$ is nonsingular.
Thus, $\widehat C$ is block superregular.
Before reordering the rows, designate the last $\beta$
coordinates in each systematic block as read coordinates.
Restricting each block row of $\widehat C$ to its first
$u$ rows and deleting the resulting last $\beta$ zero
columns from each block column gives
\[
    H=N\otimes I_u
      =
      \begin{bmatrix}
          H_0\\
          \vdots\\
          H_{\kf-1}
      \end{bmatrix},\ \text{where} \ 
    H_i=
    \begin{bmatrix}
        n_{i,0}I_u & \cdots & n_{i,\rf-1}I_u
    \end{bmatrix}\in\F_q^{u\times\rf u}.
\]
We select $w$ anchor rows from each $H_i$ by defining
$ S_i=\{iw,iw+1,\ldots,(i+1)w-1\}\pmod u,$ and $\overline S_i:=[u]\setminus S_i.$
Since $w<u$, the integers $iw,\ldots,(i+1)w-1$ have distinct remainders modulo $u$.
Hence $|S_i|=w$ and $|\overline S_i|=u-w=\tau$.
Set
$$\Atwo_0=\begin{bmatrix}
    H_0[S_0;:]\\ \vdots\\H_{\kf-1}[S_{\kf-1};:]
\end{bmatrix}\in\F_q^{\kf w\times\rf u},\ \Mtwo_0=\begin{bmatrix}
    H_0[\overline{S_0};:]\\ \vdots\\H_{\kf-1}[\overline{S_{\kf-1}};:]
\end{bmatrix}\in\F_q^{\kf\tau\times\rf u}. $$
To show that $\Atwo_0$ is nonsingular, for each $v\in[u]$ define $T_v:=\{i\in[\kf]:v\in S_i\}.$
The integers $iw+a$, with $i\in[\kf]$ and $a\in[w]$,
range over $[\kf w]=[\rf u]$.
Each residue modulo $u$ therefore occurs exactly $\rf$
times, in distinct block rows because $w<u$.
Hence $|T_v|=\rf$.
The $v$-th row of $H_i$ is supported on the $v$-th
coordinate of each block column, with entries
$n_{i,0},\ldots,n_{i,\rf-1}$.
Grouping the selected rows and columns by their coordinate index therefore transforms $\Atwo_0$ into $\operatorname{diag}\bigl(N[T_0;:],\ldots,N[T_{u-1};:]\bigr).$
Each diagonal block is a square submatrix of the Cauchy
matrix $N$ and is therefore nonsingular.
Thus, $\Atwo_0\in\mathrm{GL}_{\rf u}(\F_q)$.

For each $i\in[\kf]$, let $\Pi_i$ be the permutation matrix that orders the rows of the $i$-th block row of $\widehat C$ according to $S_i$, $[\ell]\setminus[u]$, and $\overline S_i$, with each set taken in increasing order.
Define $\Pi:=\operatorname{diag} (\Pi_0,\ldots,\Pi_{\kf-1}),$ and  $C=\Pi\widehat C.$
The rows in each systematic block are now ordered as
$w$ anchor coordinates, $\beta$ read coordinates, and
$\tau$ remaining unread coordinates.
In particular, the read coordinates occupy $[\sigma]\setminus[w]$.
Since $\Pi$ acts by an invertible row permutation within each block row, $C$ remains block superregular.
Thus, $G^F=[I_{\kf\ell}\mid C]$ generates an MDS array code.

Deleting the read rows from each block row of $C$
and the resulting zero columns gives
\[
    \widetilde H=
    \begin{bmatrix}
        H_0[S_0;:]\\
        H_0[\overline S_0;:]\\
        \vdots\\
        H_{\kf-1}[S_{\kf-1};:]\\
        H_{\kf-1}[\overline S_{\kf-1};:]
    \end{bmatrix}\in\F_q^{\kf u\times\rf u}.
\]
Collecting the first $w$ rows from each $u$-row block
of $\widetilde H$ gives $\Atwo_0$, while collecting
the remaining $\tau$ rows gives $\Mtwo_0$.
Define $F=I_\lambda\otimes\widetilde H\in\F_q^{\ki u\times\ri\sigma}$, $\Atwo=I_\lambda\otimes \Atwo_0\in \mathrm{GL}_{\ri\sigma}(\F_q),$ and  $\Mtwo:=I_\lambda\otimes\Mtwo_0
\in\F_q^{\ki\tau\times\ri\sigma}$.
These matrices will be used in the construction of the initial code.

\begin{figure*}[t]
\centering
\resizebox{0.96\textwidth}{!}{%
\begin{tikzpicture}[
    >=Latex,
    font=\scriptsize,
    node distance=6mm and 8mm,
    box/.style={
        draw,
        rounded corners,
        align=center,
        inner sep=3pt
    },
    arr/.style={->, thick}
]

%====================================================
% Left part: construction of the final code
%====================================================

\node[box] (N) {$
N=(n_{i,j})_{\substack{i\in[\kf]\\ j\in[\rf]}}
$};
\node[above=1mm of N] {\scriptsize $N$ is a  Cauchy matrix};

\node[box, right=of N] (Chat) {$
\widehat C=N\otimes I_\ell
$};

\node[box, right=of Chat] (H) {$
H=N\otimes I_u=
\begin{bmatrix}
H_0\\
\vdots\\
H_{\kf-1}
\end{bmatrix}
$};

\draw[arr] (N) -- (Chat);
% \draw[arr] (Chat) -- (H);
\draw[arr] (Chat) -- node[above=7mm] {restrict to $[u]$ rows} (H);
% \node[above=1mm of Chat] {\scriptsize restrict to $[u]$ rows};

% A0 and M0
\node[box, right=of H, yshift=9mm] (A0) {$
\Atwo_0=
\begin{bmatrix}
H_0[S_0;:]\\
\vdots\\
H_{\kf-1}[S_{\kf-1};:]
\end{bmatrix}
$};

\node[box, right=of H, yshift=-9mm] (M0) {$
\Mtwo_0=
\begin{bmatrix}
H_0[\overline S_0;:]\\
\vdots\\
H_{\kf-1}[\overline S_{\kf-1};:]
\end{bmatrix}
$};

\draw[arr] (H.east) -- (A0.west);
\draw[arr] (H.east) -- (M0.west);

\node[above=1mm of A0] {\scriptsize anchor rows $S_i$};
\node[below=1mm of M0] {\scriptsize remaining rows $\overline S_i$};

% C and GF
\node[box, below=23.5mm of Chat] (C) {$
C=\Pi\widehat C
$};

\node[box, below=8mm of C] (GF) {$
G^F=[I_{\kf\ell}\mid C]
$};

%\draw[arr] (Chat) -- (C);
\draw[arr] (C) --(GF);
\draw[arr] (Chat) -- node[left=1mm] {\scriptsize \shortstack{Reorder each  block as\\$S_i$,  $[\ell]\setminus[u]$, $\overline S_i$}} (C);
%\node[below=1mm of C] {\scriptsize reorder as anchor | read | remaining};

% Htilde
\node[box, below=10mm of H, xshift=1mm] (Ht) {$
\widetilde H=
\begin{bmatrix}
H_0[S_0;:]\\
H_0[\overline S_0;:]\\
\vdots\\
H_{\kf-1}[S_{\kf-1};:]\\
H_{\kf-1}[\overline S_{\kf-1};:]
\end{bmatrix}
$};

% \draw[arr] (C.east) -- node[pos=0.8,above=3mm] {\scriptsize restrict to $[u]$ rows} (Ht);
\draw[arr] (C.east) -- node[pos=0.28, above=3mm] {\scriptsize \shortstack{restrict to\\ unread rows}} (Ht);

%====================================================
% Middle part: tensor replication and E,Z,D
%====================================================

\node[box, right=14mm of A0] (A1) {$
\Atwo=I_\lambda\otimes\Atwo_0
$};

\node[box, right=14mm of M0] (M1) {$
\Mtwo=I_\lambda\otimes\Mtwo_0
$};

\node[box, right=12mm of Ht] (F) {$
F=I_\lambda\otimes\widetilde H
$};

\draw[arr] (A0) -- (A1);
\draw[arr] (M0) -- (M1);
\draw[arr] (Ht) -- (F);

% E and E_an
\node[box,above right=5mm and 8mm of A1] (E) {$
\begin{aligned}
E&=(E_{i,j})\\
E_{i,j}&=
\begin{bmatrix}
E^{\rm an}_{i,j}\\
E^{\rm rd}_{i,j}
\end{bmatrix}
\end{aligned}
$};
\node[left=1mm of E] {\scriptsize $E$ is a  Cauchy matrix};
\node[box, below=4mm of E] (Ean) {$
E_{\rm an}=
\bigl(E^{\rm an}_{i,j}\bigr)
$};

\draw[arr] (E.south) -- (Ean.north);

% Z
\node[box, right=of M1] (Z) {$
\begin{aligned}
Z&=\Mtwo L\\
 &=(Z_{i,j})
\end{aligned}
$};

% D
\node[box, right=12mm of F] (D) {$
\begin{aligned}
D&=FL\\
D_{i,j}&=
\begin{bmatrix}
E^{\rm an}_{i,j}\\
Z_{i,j}
\end{bmatrix}
\end{aligned}
$};

% Arrows with L
\draw[arr] (A1) -- node[above] {$L$} (Ean);
\draw[arr] (M1) -- node[below] {$L$} (Z);
\draw[arr] (F) -- node[above] {$L$} (D);

%====================================================
% Right part: B_ij
%====================================================

\node[box, right=12mm of D] (B) {$
B_{i,j}=
\begin{bmatrix}
E^{\rm an}_{i,j} & 0\\
E^{\rm rd}_{i,j} & 0\\
Z_{i,j} & Q_{i,j}
\end{bmatrix}
$};

\draw[arr] (E.east) -- ++(5mm,0) |- (B.east);
% \draw[arr] (Z.east) -- ++(5mm,0) ;
\draw[arr, -]  (Z.east) -- ++(7mm,0);
\draw[arr] (D.east) -- (B.west);

% Optional label for Q
\node[below=1mm of B] {\scriptsize $Q=(Q_{i,j})$ is a Cauchy  matrix};

\end{tikzpicture}%
}
\caption{
Construction of the initial and final codes in Case~II.
The matrices $N$, $E$, and $Q$ are scalar Cauchy matrices.
The row permutation $\Pi$ orders each systematic block
as anchor, read, and remaining unread coordinates.
Setting $L=\Atwo^{-1}E_{\rm an}$ gives $Z=\Mtwo L$
and $D=FL$, from which the initial parity blocks
are formed as in \eqref{eq:II-Bij}.
}
\label{fig:caseII-complete}
\end{figure*}
\subsubsection{Construction of the initial code $C^I$}
Choose a scalar Cauchy matrix $E\in\F_q^{\ki\sigma\times\ri\sigma},$ and partition it into $\ki\times\ri$ blocks $E=(E_{i,j})_{i\in[\ki],j\in[\ri]}$, where $E_{i,j}\in\F_q^{\sigma\times\sigma}.$
Since $\sigma=w+\beta$, write
\[
    E_{i,j}=
    \begin{bmatrix}
        E^{\rm an}_{i,j}\\
        E^{\rm rd}_{i,j}
    \end{bmatrix},
\]
where $E^{\rm an}_{i,j}\in\F_q^{w\times\sigma}$,  $E^{\rm rd}_{i,j}\in\F_q^{\beta\times\sigma}.$
Let $E_{\rm an}$ be the matrix obtained by retaining
the first $w$ rows of each block row of $E$.
Since $\ki w=\ri\sigma,$ it is a square submatrix
of the scalar Cauchy matrix $E$ and is therefore
nonsingular.
Set $L=\Atwo^{-1}E_{\rm an}$, $Z=\Mtwo L$ and $D=FL.$
Then $L$ is invertible and $\Atwo L=E_{\rm an}$.
Partition $Z$ and $D$ as
 $Z=(Z_{i,j})_{i\in[\ki],j\in[\ri]}$, $D=(D_{i,j})_{i\in[\ki],j\in[\ri]}$, where $Z_{i,j}\in\F_q^{\tau\times\sigma}$ and
$D_{i,j}\in\F_q^{u\times\sigma}$.
Since $\Atwo$ and $\Mtwo$ collect the first $w$
and the remaining $\tau$ rows from each of the $\ki$
block rows of $F$, respectively, we have
\[
    D_{i,j}
    =
    \begin{bmatrix}
        E^{\rm an}_{i,j}\\
        Z_{i,j}
    \end{bmatrix}.
\]

Choose a scalar Cauchy matrix $Q\in\F_q^{\ki\tau\times\ri\tau}$
and partition it as $ Q=(Q_{i,j})_{i\in[\ki],j\in[\ri]},$ where $Q_{i,j}\in\F_q^{\tau\times\tau}.$
For each $i\in[\ki]$ and $j\in[\ri]$, define
\begin{equation}
\label{eq:II-Bij}
    B_{i,j}
    :=
    \begin{bmatrix}
        E^{\rm an}_{i,j} & 0\\
        E^{\rm rd}_{i,j} & 0\\
        Z_{i,j}          & Q_{i,j}
    \end{bmatrix}=
    \begin{bmatrix}
        E_{i,j} & 0\\
        Z_{i,j} & Q_{i,j}
    \end{bmatrix},
\end{equation}
and let $B=(B_{i,j})_{i\in[\ki],\,j\in[\ri]}.$
The initial code $C^I$ is generated by $G^I=[I_{\ki\ell}\mid B].$
Since $E$ and $Q$ are scalar Cauchy matrices,
the same block-triangular argument as in Case~I
shows that $B$ is block superregular and the initial
code is MDS.
Figure~\ref{fig:caseII-complete} illustrates the
constructions of the initial and final codes.

\subsubsection{Conversion and field size}
Read the $\beta$ coordinates indexed by
$[\sigma]\setminus[w]$ from every unchanged systematic
node, and download the first $\sigma$ coordinates
from every retired parity node.
Deleting the read rows and retaining the downloaded
parity columns in $B$ gives $\widetilde B=D=FL.$
On the final-code side, deleting the zero columns
of $\widetilde C$ gives $F$, so $\col(\widetilde C)=\col(F)$.
Since $L$ is invertible,
\[
    \col(\widetilde C)
    =\col(F)
    =\col(FL)
    =\col(\widetilde B).
\]
Thus, Lemma~\ref{lem:conversion-characterization}
ensures that these downloads permit a linear conversion
preserving all $\ki$ systematic symbols.
Its read bandwidth is
$R=\ki\beta+\ri\sigma=\frac{(\ki)^2\rf}{\Delta}\ell,$
which attains the Case~II lower bound.

To construct the scalar Cauchy matrices $N$, $E$, and $Q$, it suffices to choose a prime power $q$ satisfying $q>
    \max\bigl\{
        \kf+\rf,\,
        (\ki+\ri)\sigma,\,
        (\ki+\ri)\tau
    \bigr\}$.  
Hence one may take $ q=O\bigl((\ki+\ri)\ell\bigr).$

% =====================================================================

\subsection{Case III: $\kf<\ri<\ki$ and $\Gamma<1$}
\label{subsec:case-III}

Recall that $\mthr=\left\lfloor\frac{\ri-1}{\kf}\right\rfloor,$ $\delthr=\ri-\mthr\kf,$ $ \hthr=\lambda-\mthr, $
and $\Tthr=\ri+(\hthr-1)\rf.$
Choose $\ell$ such that $\delta T\mid\ell$, and set
$\betthr=\frac{(\hthr-1)\rf}{\Tthr}\ell$, $\sigthr=\frac{\rf(\mthr\kf+\hthr\delthr)}{\delthr \Tthr}\ell$. 
Each unchanged systematic node contributes $\beta$
subsymbols, while each retired parity node contributes
$\sigma$ subsymbols.
These read amounts attain the Case~III lower bound
once the conversion is established.
Define $\uthr=\ell-\beta=\frac{\ri}{T}\ell,$ $ \wthr=\sigma-\beta=\frac{\rf}{\delta}u,$ and $\tauthr=\ell-\sigthr=u-w.$
The condition $\Gamma<1$ implies $h\ge2$ and
$\rf<\delta\le\kf$, so $0<\beta<\sigma<\ell$.
The following identities will be used below:
\[
    \delthr \wthr=\rf \uthr,
    \qquad
    \ri \betthr=(\hthr-1)\delthr\wthr,
    \qquad
    \ri\sigthr=(\mthr\kf+\hthr\delthr)\wthr.
\]

\subsubsection{Construction of the final code $C^F$}

Set $ \nzerthr=\kf\wthr,$ $\dzerthr=\delthr\wthr=\rf\uthr.$ 
Then $d_0\le n_0$.
Choose  a multiplicative subgroup $\mathHthr$ of $\F_q$ of size $\nzerthr$, and partition it as 
\[
\mathHthr=\mathHthr_0\mathbin{\dot\cup}\cdots\mathbin{\dot\cup}\mathHthr_{\kf-1},
    \ \text{where }  |\mathHthr_i|=w\  \text{for} \ i\in[\kf].
\]
Choose two disjoint subsets $\mathUthr,\mathVthr\subset \F_q\setminus\mathHthr$ with 
$|\mathUthr|=\kf\tau$ and $|\mathVthr|=d_0$, and partition them as
\begin{align*}
\mathUthr=\mathUthr_0\mathbin{\dot\cup}\cdots\mathbin{\dot\cup}\mathUthr_{\kf-1},
    \ \text{where }  |\mathUthr_i|=\tauthr\ \text{for} \ i\in[\kf],\\
\mathVthr=\mathVthr_0\mathbin{\dot\cup}\cdots\mathbin{\dot\cup}\mathVthr_{\rf-1},
    \ \text{where }  |\mathVthr_j|=\uthr\ \text{for} \ j\in[\rf].
\end{align*}
For $i\in[\kf],j\in[\rf]$, define
$\Wthr_{i,j}=(\frac{1}{a-b})_{
        a\in\mathHthr_i\cup\mathUthr_i,\,
        b\in\mathVthr_j}\in\F_q^{\uthr\times\uthr}$,
using fixed orderings within each set and placing the rows indexed by $\mathcal H_i$ before those indexed by $\mathcal U_i$.
 Then $\Wthr=(\Wthr_{i,j})_{i\in[\kf],j\in[\rf]}\in\F_q^{\kf u\times\rf u}$ is a scalar Cauchy matrix.
Choose another scalar Cauchy matrix $ \Qthr\in\F_q^{\kf\betthr\times\rf\betthr},$ and partition it into $\kf\times\rf$ blocks $\Qthr=(\Qthr_{i,j})_{i\in[\kf],j\in[\rf]}$.
Define $C_{i,j}=\begin{bmatrix}
    \Wthr_{i,j}&0\\
    0&\Qthr_{i,j}
\end{bmatrix}$, and let $C=(C_{i,j})_{i\in[\kf],j\in[\rf]}$.
After row and column permutations, every square block
submatrix of $C$ becomes block diagonal, with diagonal
blocks that are square submatrices of $\Wthr$ and
$\Qthr$.
Since both diagonal blocks are nonsingular, $C$ is block superregular, and $G^F=[I_{\kf\ell}\mid C]$ generates an MDS array code.
% The last $\beta$ rows in $\begin{bmatrix} 0&\Qthr_{i,j} \end{bmatrix}$ is designed to be the removed $\beta$ rows. 

For each $i\in[\kf]$, let the rows of $\Wthr_{i,j}$ indexed by $\mathHthr_i$ be its $\wthr$ anchor rows.
Let $\Wthr_{\rm an}=(\Wthr_{i,j}[\mathHthr_i;:])_{i\in[\kf],j\in[\rf]}\in
    \F_q^{\kf\wthr\times\dzerthr}$ be  the matrix formed by  all anchor rows, and 
let $\Wthr_{\rm lo}=(\Wthr_{i,j}[\mathUthr_i;:])_{i\in[\kf],j\in[\rf]}\in
    \F_q^{\kf\tauthr\times\dzerthr}$ be the matrix
    formed by  the remaining rows. 
% After a fixed row permutation, $\Wthr$ is equivalent to
%     $\begin{bmatrix}
%         \Wthr_{\rm an}\\
%         \Wthr_{\rm lo}
%     \end{bmatrix}.$
We next express $\col(\Wthr_{\rm an})$ in terms of
polynomial evaluations.
Define $\Qpolthr=\prod_{b\in \mathVthr}(X-b).$ 
For each $b\in\mathVthr$, the column of $\Wthr_{\rm an}$ corresponding to $b$ can be written as  $\left( Q_{\mathcal V}(x)^{-1}\left(\frac{Q_\mathcal{V}(x)}{x-b}\right)\right)_{x\in\mathHthr}$. 
The polynomials $\left\{
        \frac{\Qpolthr}{X-b}
        :
        b\in\mathVthr
    \right\}$ are linearly independent.
Otherwise, if
    $\sum_{b\in\mathVthr}
    c_b
    \frac{\Qpolthr}{X-b}
    =0,$ then evaluating at $X=b_0\in\mathVthr$ gives $ c_{b_0}\prod_{\substack{b\in\mathVthr\\b\ne b_0}}(b_0-b)=0,$ and hence $c_{b_0}=0$.
Since these $\dzerthr$ polynomials all have degree less than $\dzerthr$, they form a basis of the polynomial space of degree less than $\dzerthr$.
Therefore, we have 
\begin{equation}
\label{eq:III-Cauchy-GRS}
    \col(\Wthr_{\rm an})
    =
    \left\{
        \left(Q_\mathcal{V}(x)^{-1}p(x)\right)_{x\in H}:
        \deg p< \dzerthr
    \right\}.
\end{equation}
After the last $\beta$ read coordinates are removed, the residual final parity transform for one final group is ${\Wthr}$. 
Define $\Fthr=I_\lambda\otimes\Wthr,$  $\Fthr_{\rm an}=I_\lambda\otimes\Wthr_{\rm an},$ and  $\Fthr_{\rm lo}=I_\lambda\otimes\Wthr_{\rm lo}.$
Then $\Fthr_{\rm an}$ and $\Fthr_{\rm lo}$ are respectively the anchor-row and remaining-row restrictions of $\Fthr$.
% Define $\Fthr=I_\lambda\otimes \widetilde{  \Wthr},$
% and $\Fthr_{\rm an}:=I_\lambda\otimes   \Wthr_{\rm an},$    $\Fthr_{\rm lo}:=I_\lambda\otimes   \Wthr_{\rm lo}.$
% Then we have, $\Fthr=\begin{bmatrix}\Fthr_{\rm an}\\ \Fthr_{\rm lo} \end{bmatrix}$.
    
\subsubsection{Construction of the initial code $C^I$}
We first construct an auxiliary
$[\ki+\ri,\ri,\sigma]$ MDS array code $\mathcal E$.
Its systematic representation will provide a block
superregular matrix $E$ whose first $w$ rows in each
block row form a matrix $E_{\rm an}$ satisfying
$\col(\Fthr_{\rm an})\subseteq\col(E_{\rm an}).$
Choose $\lambda$ distinct cosets $\zeta_0\mathHthr,\ldots,\zeta_{\lambda-1}\mathHthr$ in $\F_q^\times$, and define $g(X):=X^{n_0},$ and  $\theta_t=\zeta_t^{n_0}$ for $t\in[\lambda].$
For every $x\in\mathcal H$, we have $$g(\zeta_t x)=(\zeta_t x)^{n_0}=\zeta_t^{n_0}=\theta_t.$$
Thus $g$ is constant on each coset $\zeta_t\mathcal H$.
Moreover, the values $\theta_0,\ldots,\theta_{\lambda-1}$ are pairwise distinct. 
This is because $\mathcal H=\{x\in\F_q^\times:x^{n_0}=1\}$ for $|\mathcal H|=n_0$.
Thus $\theta_s=\theta_t$ would imply $\zeta_s/\zeta_t\in\mathcal H$,  contradicting the choice of
distinct cosets.

Define the polynomial space
\begin{equation}
\label{eq:III-poly-space-new_1}
\mathcal P:=\Bigl\{\sum_{j=0}^{m-1}g(X)^jp_j(X)+\sum_{j=m}^{m+h-1}g(X)^jq_j(X):\deg p_j<n_0,\ \deg q_j<d_0
\Bigr\}.
\end{equation}
Since $g(X)=X^{n_0}$ and $d_0\le n_0$, the monomials
in different summands of \eqref{eq:III-poly-space-new_1} have distinct degrees.
Therefore,
\[
    \dim\mathcal P
    =
    mn_0+hd_0
    =
    (m\kf+h\delta)w
    =
    \ri\sigma.
\] 
Moreover, every polynomial in $\mathcal P$ has degree at most
\[
    (\lambda-1)\nzerthr+\dzerthr-1
    <
    \lambda\nzerthr
    =
    \ki\wthr.
\]
We now use $\mathcal P$ to  construct the auxiliary code $\mathcal E$. 
For a message $f\in\mathcal P$,  each auxiliary node stores $ \sigthr= \wthr+\betthr$ subsymbols in two layers.  
In the first layer, node $s\kf+t$, for
$s\in[\lambda]$ and $t\in[\kf]$, stores $\left(Q_{\mathcal V}(x)^{-1}f(\zeta_sx)\right)_{x\in\mathcal H_t}.$ 
The coordinates follow the ordering of $\mathHthr_t$
used in $\Wthr$.
The first $\ki$ auxiliary nodes therefore use $\ki w$ distinct evaluation points.
All multipliers are nonzero because $\mathHthr\cap\mathVthr=\varnothing$.
For the last $\ri$  auxiliary nodes, choose $\ri w$ distinct elements $$\mathFthr =\mathFthr_0\mathbin{\dot\cup}\cdots\mathbin{\dot\cup}\mathFthr_{\ri-1}\subseteq\F_q\setminus \bigcup_{s\in[\lambda]}\zeta_s\mathHthr,$$ where $|\mathFthr_i|=\wthr$ for $i\in[\ri].$  
Node $\ki+i$ stores $\left(f(x)\right)_{x\in\mathFthr_i}$
in its first layer, using a fixed ordering of
$\mathFthr_i$.
Thus, all $(\ki+\ri)w$ first-layer evaluation points
are distinct.

For the second layer, collect the coefficients of
$X^0,\ldots,X^{d_0-1}$ in each of $q_{m+1},\ldots,q_{m+h-1}$.
There are
    $(h-1)d_0=(h-1)\delta w=\ri\beta$
such coefficients.
Encode them using a scalar $[(\ki+\ri)\beta,\ri\beta]$ Reed--Solomon code and place $\beta$ encoded symbols in each auxiliary node. 
Together, the two layers give $\sigma$ subsymbols
per node, as illustrated in Figure~\ref{fig:caseIII-E-encoding}.
\begin{figure}[t]
\centering
\begin{tikzpicture}[scale=1.0, every node/.style={font=\small}]

\draw (0,0) rectangle (11,3.5);
\draw (0,1.5) -- (11,1.5);
\draw (7,1.5) -- (7,3.5);

\node[align=center] at (3.5,2.5)
{$\left(Q_{\mathcal V}(x)^{-1}f(\zeta_s x)\right)_{x\in\mathcal H_t}$ for
$s\in[\lambda],\ t\in[\kf]$};
% {$\left(Q_{\mathcal V}(x)^{-1}f(\zeta_s x)\right)_{x\in\mathcal H_t}$\\[0.4em]
% $s\in[\lambda],\ t\in[\kf]$};

\node[align=center] at (9,2.5)
{$\left(f(x)\right)_{x\in\mathcal F_i}$ for 
$i\in[\ri]$};

\node[align=center] at (5.5,0.75)
{$[(\ki+\ri)\beta,\ \ri\beta]$ Reed--Solomon  code};

\draw[decorate,decoration={brace,amplitude=6pt}]
(0,3.75) -- (7,3.75)
node[midway,above=8pt] {$\ki$};

\draw[decorate,decoration={brace,amplitude=6pt}]
(7,3.75) -- (11,3.75)
node[midway,above=8pt] {$\ri$};

\draw[decorate,decoration={brace,amplitude=6pt,mirror}]
(11.25,1.5) -- (11.25,3.5)
node[midway,right=8pt] {$w$};

\draw[decorate,decoration={brace,amplitude=6pt,mirror}]
(11.25,0) -- (11.25,1.5)
node[midway,right=8pt] {$\beta$};

\end{tikzpicture}
\caption{Encoding scheme of the $[\ki+\ri,\ri,\sigma]$ MDS array code $\mathcal E$ in Case~III, where $f\in\mathcal P$. 
The first layer consists of weighted evaluations on the first $\ki$  auxiliary nodes and ordinary evaluations on the last $\ri$ auxiliary nodes, while the second layer is a Reed--Solomon encoding of dimension $\ri\betthr$.}
\label{fig:caseIII-E-encoding}
\end{figure}
To verify the MDS property of $\mathcal E$, consider
any $\ri$ auxiliary nodes.
Their $\ri\beta$ second-layer symbols recover
$q_{m+1},\ldots,q_{m+h-1}$.
Subtracting the corresponding contributions from the
first-layer symbols and dividing by the nonzero
evaluation multipliers gives evaluations of
\[
    f_0(X)
    =
    \sum_{j=0}^{m-1}X^{jn_0}p_j(X)
    +
    X^{mn_0}q_m(X).
\]
Since
\[
    \deg f_0
    <
    mn_0+d_0
    =
    (m\kf+\delta)w
    =
    \ri w,
\]
the $\ri w$ distinct  evaluation points supplied by these $\ri$ nodes uniquely determine $f_0$.  
Hence any $\ri$ nodes recover $f$, proving that
$\mathcal E$ is an $[\ki+\ri,\ri,\sigma]$
MDS array code.
Use the last $\ri$ auxiliary nodes as an information
set and write the systematic encoding matrix in
column form as
\[
    M_{\mathcal E}
    =
    \begin{bmatrix}
        E\\
        I_{\ri\sigma}
    \end{bmatrix},
    \qquad
    E\in\F_q^{\ki\sigma\times\ri\sigma}.
\]
For $\mathbf m\in\F_q^{\ri\sigma}$, the corresponding
auxiliary codeword, written as a column vector,
is $M_{\mathcal E}\mathbf m$.
The MDS property implies that $E$ is block
superregular with $\sigma\times\sigma$ blocks.

Let $E_{\rm an}\in\F_q^{\ki w\times\ri\sigma}$
be obtained by retaining the first $w$ rows of
each block row of $E$.
By construction, the column space $\col(E_{\rm an})$ consists of the weighted evaluation vectors of polynomials in $\mathcal P$ on the first $\ki$ auxiliary nodes.
Consequently, $\col(E_{\rm an})$ is the space of
first-layer evaluation vectors on the first
$\ki$ auxiliary nodes.
Since every polynomial in $\mathcal P$ has degree less than $\ki\wthr$, evaluation on the $\ki\wthr$ distinct points of the first $\ki$ nodes is injective.
Therefore $\rank E_{\rm an}=\ri\sigma.$

We next show that $\col(\Fthr_{\rm an})\subseteq\col(E_{\rm an}).$
For $t\in[\lambda]$, define the Lagrange polynomial
\[
    L_t(Y):=
    \prod_{\substack{s\in[\lambda]\\s\ne t}}
    \frac{Y-\theta_s}{\theta_t-\theta_s}.
\]
For every polynomial $p$ with $\deg p<d_0$, define $f_{t,p}(X)=L_t(g(X))p(X/\zeta_t).$
Expanding $L_t$ shows that $f_{t,p}$ is a linear
combination of $g(X)^jp(X/\zeta_t)$, $j\in[\lambda]$.
Since $d_0\le n_0$, each term satisfies the degree
conditions in \eqref{eq:III-poly-space-new_1};
hence $f_{t,p}\in\mathcal P$.
For $x\in\mathHthr$, we have
\[
    f_{t,p}(\zeta_sx)
    =
    \begin{cases}
        p(x), & s=t,\\
        0, & s\ne t.
    \end{cases}
\]
By \eqref{eq:III-Cauchy-GRS}, the corresponding
weighted evaluation vectors range over $\col(\Wthr_{\rm an})$ on the $t$-th coset
and vanish on all other cosets.
Varying $t$ and taking linear combinations gives $\col(F_{\rm an})\subseteq\col(E_{\rm an}).$
This inclusion and the full column rank of
$E_{\rm an}$ imply that there is a unique matrix
$J\in\F_q^{\ri\sigma\times\lambda d_0}$ satisfying
    $E_{\rm an}J=\Fthr_{\rm an}.$
Since $\Fthr_{\rm an}$ has full column rank,
so does $J$.
Choose a left inverse
$K\in\F_q^{\lambda d_0\times\ri\sigma}$
satisfying $KJ=I_{\lambda d_0}$, and define
    $Z=\Fthr_{\rm lo}K
      \in\F_q^{\ki\tau\times\ri\sigma}.$
Then we have  $ZJ=\Fthr_{\rm lo}.$
The matrices $J$ and $K$ can be computed by
Gaussian elimination.
Partition $E$ and $Z$ into $\ki\times\ri$ blocks and write $E=(E_{i,j})_{i\in[\ki],j\in[\ri]},$ where $ E_{i,j}\in\F_q^{\sigma\times\sigma},$ and  $Z=(Z_{i,j})_{i\in[\ki],j\in[\ri]},$  where  $Z_{i,j}\in\F_q^{\tau\times\sigma}.$
Write $E_{i,j}=\begin{bmatrix}E^{\rm an}_{i,j}\\E^{\rm rd}_{i,j}\end{bmatrix},$
where $E^{\rm an}_{i,j}\in\F_q^{\wthr\times\sigthr},$ and $E^{\rm rd}_{i,j}\in\F_q^{\betthr\times\sigthr}.$
Choose a scalar Cauchy matrix $\Qthrini\in\F_q^{\ki\tau\times\ri\tau},$ and partition it into $\ki\times\ri$ blocks as $\Qthrini=(\Qthrini_{i,j})_{i\in[\ki],j\in[\ri]},$  where $\Qthrini_{i,j}\in\F_q^{\tau\times\tau}.$
For $i\in[\ki]$ and $j\in[\ri]$, define
\begin{equation}
\label{eq:III-Bij-new}
    B_{i,j}=
    \begin{bmatrix}
         E^{\rm an}_{i,j} & 0\\
        Z_{i,j} & \Qthrini_{i,j}\\
        E^{\rm rd}_{i,j}& 0
    \end{bmatrix},
\end{equation}
and let $ B=(B_{i,j})_{i\in[\ki],\,j\in[\ri]}.$
The initial code $C^I$ is generated by $ G^I=[I_{\ki\ell}\mid B].$

Moving the last $\beta$ rows of each block row
immediately after its first $w$ rows gives blocks
of the form
\[
    \begin{bmatrix}
        E_{i,j} & 0\\
        Z_{i,j} & \Qthrini_{i,j}
    \end{bmatrix}.
\]
Since $E$ is block superregular and $\Qthrini$
is a scalar Cauchy matrix, the block-triangular
argument from Case~I proves that $B$ is block
superregular.
Thus, the initial code is MDS.

\subsubsection{Conversion and field size}
\label{caseIII_field}

Read the last $\betthr$ coordinates from every unchanged systematic node and download the first $\sigthr$ coordinates from every retired parity node.
For these read sets, the matrix $\widetilde C$ in
Lemma~\ref{lem:conversion-characterization} reduces to
$\Fthr$ after its zero columns are deleted.
Hence $\col(\widetilde C)=\col(\Fthr)$.
Let $P$ be the row permutation matrix that places all anchor rows before the remaining unread rows, preserving their relative order within each part.
Then
\[
    P\widetilde B
    =
    \begin{bmatrix}
        E_{\rm an}\\
        Z
    \end{bmatrix},
    \qquad
    P\Fthr
    =
    \begin{bmatrix}
        \Fthr_{\rm an}\\
        \Fthr_{\rm lo}
    \end{bmatrix}.
\]
Since $E_{\rm an}J=\Fthr_{\rm an}$ and
$ZJ=\Fthr_{\rm lo}$, we have
$P\widetilde B J=P\Fthr$, and hence
$\widetilde B J=\Fthr$.
Therefore, $\col(\widetilde C)=\col(\Fthr)\subseteq\col(\widetilde B).$
By Lemma~\ref{lem:conversion-characterization},
these downloads permit a linear conversion preserving all $\ki$ systematic symbols.
Its read bandwidth is
\[
\begin{aligned}
    R
    &=\ki\beta+\ri\sigma\\
    &=
    \left(
        \frac{\ki(h-1)\rf}{T}
        +\frac{\ri\rf(m\kf+h\delta)}{\delta T}
    \right)\ell,
\end{aligned}
\]
which attains the Case~III lower bound.

For the field size, it suffices to choose a prime power
$q$ satisfying $\kf w\mid(q-1)$ and $q>M_{\rm III}$,
where
$    M_{\rm III}
    =
    \max\bigl\{
        (\ki+\ri)w,\,
        (\kf+\rf)u,\,
        (\ki+\ri)\beta,\,
        (\ki+\ri)\tau
    \bigr\}.$
These conditions ensure the existence of $\mathHthr$
and enough distinct cosets, evaluation points, and
Cauchy parameters for the construction.

To obtain an unconditional polynomial field-size bound,
set $N_0=\kf w\left\lceil\frac{M_{\rm III}+1}{\kf w}\right\rceil.$
By Lemma~\ref{lem:linnik},
there exists a prime $q\equiv1\pmod{N_0}$ with
$q=O(N_0^{L_0})$ for an absolute constant $L_0$.
Since $\kf w\mid N_0$ and $q>N_0>M_{\rm III}$,
this prime satisfies the required field conditions.
Moreover, $N_0=O((\ki+\ri)\ell)$, so one may take
$    q=O\!\left(((\ki+\ri)\ell)^{L_0}\right).$

\section{Conclusion}
\label{sec:conclusion}

In this paper, we studied the  bandwidth cost of stable MDS convertible codes with linear conversion procedures in the split regime for the remaining parameter range $ \rf<\kf<\ri.$

By considering coordinate projections of the associated column spaces of the initial and final codes onto different groups of coordinates, we derive a family of rank inequalities involving the amounts of data
read from the unchanged and retired nodes.
These inequalities recover the previous lower bounds for the regime $\ri\le\kf$ with $\rf<\kf$.
When $\rf<\kf<\ri$, the inequalities impose further constraints that strengthen the resulting lower bound.
By solving the corresponding linear program, we obtain an improved  lower bound for this remaining regime, consisting of three cases.

To establish the tightness of the lower bound, we presented three explicit constructions.  
The constructions for Cases~I and~II are based on selecting suitable \emph{anchor rows} of the final parity matrix, and choosing an invertible transformation that maps the corresponding anchor-row matrix to a Cauchy submatrix of the initial parity matrix.
This yields the required column-space relation for conversion, while the resulting Cauchy structure guarantees the MDS property.
For Case~III, to realize the required column-space inclusion for conversion while preserving the MDS property, we construct an auxiliary MDS array code with two layers: the first layer is based on structured polynomial evaluations, and the second layer is based on a Reed--Solomon code.
The three constructions attain the corresponding lower bounds.
The  required integrality conditions can always be satisfied by choosing the subpacketization $\ell$ appropriately, with $\ell$ polynomial in the the initial code length $n^I$.
The constructions for Cases~I and~II can be realized over fields of size $ q=O\bigl((\ki+\ri)\ell\bigr),$
whereas for Case~III it suffices to take $q=O\!\left(((\ki+\ri)\ell)^{L_0}\right)$ for some absolute constant $L_0$.
Consequently, our lower bound is tight throughout the regime $\rf<\kf<\ri$ within the stable linear conversion model.
Since the write bandwidth is fixed for stable conversion, our results determine the minimum total conversion bandwidth in this regime.

Several questions remain for future work. 
It would be desirable to reduce the field size required by all three constructions to be linear in the code length.
Another direction is to determine the exact optimum for a prescribed subpacketization when the LP-optimal read amounts are nonintegral, without enlarging $\ell$.
\bibliographystyle{IEEEtran}
\bibliography{ref}
\end{document}